\documentclass{article}
\usepackage{graphicx,amsthm, amsmath,amssymb,amsfonts, dsfont, mathtools} 
\usepackage{tikz}
\usepackage{cleveref}
\usepackage{biblatex} 
\usepackage{thmtools}
\usepackage{thm-restate}
\usepackage{booktabs}
\usepackage[a4paper,margin=2.5cm]{geometry}
\usepackage{algorithm}
\usepackage{algpseudocode}

\usetikzlibrary{positioning,arrows.meta,calc} 

\DeclareMathOperator*{\argmin}{arg\,min}

\newtheorem{definition}{Definition}[section]

\newtheorem{remark}{Remark}[section]

\newtheorem{lemma}{Lemma}[section]
\newtheorem{assumption}{Assumption}
\newtheorem{corollary}{Corollary}

\title{Partial Identification Learning with Categorical Treatments for Individualized Treatment Rules}
\author{Johannes Hruza\thanks{University of Copenhagen. Corresponding author: \texttt{johannes.hruza@sund.ku.dk}}, Paweł Morzywołek\footnotemark[1], Jakob Zeitler \thanks{University of Oxford}, Samir Bhatt\footnotemark[1], Michael Sachs\footnotemark[1]}
\date{March 2025}

\begin{document}

\maketitle

\begin{abstract}
We develop a partial identification learning framework for individualized treatment rules (ITRs) with categorical treatments, outcomes, and instrumental variables. Rather than relying on strong causal assumptions required for point identification, our framework leverages causal bounds to characterize the optimal treatment decision. Existing methods for ITR optimization under partial identification are largely restricted to binary treatment settings and the bounds derived by Balke and Pearl under the canonical instrumental variable design. We extend this framework to accommodate a broader class of causal structures as well as scenarios with categorical treatment, outcome, and instrumental variables. We introduce a generalized minimax loss criterion for treatment selection from among more than two options, which minimizes the maximum possible difference between the chosen and the optimal treatment based on partial identification bounds. To construct the ITR, we use a symmetric embedding strategy that maps discrete treatments to the vertices of a regular simplex, avoiding the geometric inconsistencies of standard one-vs-rest approaches. We derive a differentiable, weighted surrogate risk function and show that optimizing it solves the original problem. Furthermore, we provide finite sample convergence rates via an oracle inequality under general regularity conditions, which we show are satisfied by a kernel based implementation. Numerical experiments demonstrate that the framework yields ITRs significantly closer to the oracle ITR compared to existing alternatives in settings with unmeasured confounding.
\end{abstract}


\section{Introduction}

An individualized treatment regime (ITR) aims to construct a decision rule that assigns the most suitable option from a set of possible actions based on an individual’s observed characteristics. The study of optimal ITRs has gained considerable attention across statistics, biomedicine, and econometrics. While the terminology varies (treatment regime, policy learning, treatment choice) the problem is universally formulated as optimizing expected outcomes (or welfare) under heterogeneous treatment effects \cite{Murphy2003,Manski2004,Kitagawa2018,Athey2021,Robins2004,Zhang2014,Tsiatis2019}. 

A critical assumption in much of the early works on ITRs is no unmeasured confounding, i.e. the assumption that all variables influencing both treatment assignment and the outcome are measured. Unfortunately, this assumption is rarely guaranteed in observational studies and it is frequently violated in randomized clinical trials due to noncompliance. When unmeasured confounding is present, standard ITR estimators can be biased, leading to suboptimal or even harmful policy recommendations.

One approach to circumvent the need to assume no unmeasured confounding is the use of instrumental variables (IVs). An IV is a variable that influences the treatment assignment but is independent of unmeasured confounders and has no direct effect on the outcome. Classic examples include randomized assignment in trials with imperfect compliance \cite{Angrist1996} or genetic variants in epidemiology, where the random assortment of alleles serves as a natural instrument (refered to as as Mendelian randomization) \cite{Davies2018, Lawlor2008}.

The information provided by an IV can be leveraged in two distinct ways. First, if one is willing to make strong structural assumptions -- such as monotonicity of the treatment effect or specific functional forms -- the IV can be used to point identify causal estimands like the Local Average Treatment Effect (LATE) \cite{Angrist1996, Imbens1994}. However, these assumptions are often untestable and may be implausible in complex medical or economic settings.

Alternatively, one can adopt a \textit{partial identification} approach. Instead of enforcing implausible assumptions necessary for point estimation, this method uses IVs to bound the potential outcomes, defining a ``feasible region'' within which the true causal effect must be \cite{Manski1990, Balke1997}. Recent advances have extended these bounding techniques to general causal graphs and automated derivations \cite{Sachs2022, Duarte2023, Zhang2021}. In the context of ITRs, these bounds can be used to define a minimax criterion: finding a policy that minimizes the bound-based risk consistent with the observed data, thereby ensuring robustness against unmeasured confounding of arbitrary strength.

In previous work, Pu and Zhang \cite{puzhang2021} successfully integrated this partial identification idea into ITR learning. They developed a framework to estimate optimal policies using a binary instrumental variable, reformulating the task as a weighted binary classification problem solved via Support Vector Machines (SVM). However, their approach -- like most of the ITR literature -- is limited to binary treatment settings. This limitation is significant because many real-world decision problems are inherently multicategorical (e.g., choosing among multiple competing drugs or dosage levels).

In this paper, we introduce a \textit{Categorical Partial Identification Learning} framework that extends the existing binary theory to accommodate scenarios with multiple (finite) treatment options and categorical instrumental variables. Unlike prior methods that rely on simple weighted classification, we propose a risk minimization approach using a simplex embedding. This allows us to map the categorical decision problem into a vector space, enabling the learning of smooth, nonlinear decision boundaries that are robust to unmeasured confounding.

The remainder of the paper is outlined as follows: In \cref{sec:problem-setup}, we formalize the problem setup and define the partial identification bounds for categorical treatments. \Cref{sec:partial-identification-learning} introduces our learning objective using partial identification. This is followed by \cref{sec:symetric-embedding}, where we propose our learning algorithm and the use of the simplex embedding strategy. In \cref{sec:theoretical_properties}, we provide theoretical guarantees for the learner. \Cref{sec:experiments} presents numerical experiments demonstrating the method's robustness and behavior in various settings. This is followed by a discussion in \cref{sec:discussion}.

\section{Problem Setup}
\label{sec:problem-setup}

\subsection{Notation}

Let the observed data consist of $N$ independent and identically distributed samples. We assume that there is an observed set of variables $\mathcal{O}$ that contains a vector of pre-treatment covariates $\boldsymbol{X} \in \mathcal{X} \subseteq \mathbb{R}^d$, a discrete action (or treatment) $A \in \mathcal{A} = \{1, \ldots, k\}$, and a categorical outcome $Y$. For a realization of $\boldsymbol{X}$, we write $\boldsymbol{x} \in \mathcal{X}$. Without loss of generality, we assume throughout this paper that a higher value of $Y$ is more favorable.

We frame the problem within the Nonparametric Structural Equation Models (NPSEM) framework \cite{Pearl2009}. Let $\Gamma$ be a Directed Acyclic Graph (DAG) with a vertex set $\mathcal{V}$ containing both the observed variables $\mathcal{O}$ and unobserved variables. While the specific structure of $\Gamma$ may vary, we assume at a minimum that $\boldsymbol{X}$ is a parent of both $A$ and $Y$, and that $A$ is a parent of $Y$. Under this framework, the value of each variable $V \in \mathcal{V}$ is determined by a deterministic structural function $h_{V}$:
\begin{align}
    v = h_{V}(\text{PA}_{V}, u_{V}),
\end{align}
where $\text{PA}_{V}$ denotes the realized values of the parents of $V$ in the graph $\Gamma$, and $u_{V}$ represents exogenous noise variables that are jointly independent. We make no assumptions regarding the functional form of $h_{V}$ or the distribution of unmeasured confounding variables (which may be multivariate or continuous).

To define causal effects, we use potential outcome notation derived directly from these structural equations. Let $Y(a)$ denote the potential outcome of $Y$ that would be observed if the action $A$ were set to value $a$. Formally, this corresponds to the value of $Y$ in a modified structural model where the equation for $A$ is replaced by the constant $A=a$, while all other equations remain unchanged:
\begin{align}
    Y(a) \coloneqq h_Y(\text{PA}_Y\setminus \{A\}, a , u_Y).
\end{align}
To guide individualized decision-making, we aim to determine the conditional expected outcome under a specific intervention $a$ for a patient with covariates $\boldsymbol{x}$, defined as

\begin{align}
    \mu_a(\boldsymbol{x}) := \mathbb{E}[Y(a) \mid \boldsymbol{X}=\boldsymbol{x}].
\end{align}

\subsection{Partial Identification}\label{sec:partial-identification}

Identifying the conditional expected potential outcome $\mu_a(\boldsymbol{x})$ typically requires the assumption of no unmeasured confounding. In observational studies, where treatment assignment may be influenced by unobserved factors $U$, this assumption is often violated, making $\mu_a(\boldsymbol{x})$ unidentifiable from the observed data distribution $P(\mathcal{O})$ alone.

Consequently, we adopt a \textit{partial identification} framework. Rather than relying on strong, often untestable assumptions to force point identification, we strictly characterize the feasible region of the parameter of interest compatible with the observed data and the causal model. Formally, for each action $a \in \mathcal{A}$ and covariate value $\boldsymbol{x} \in \mathcal{X}$, we identify (ideally sharp) lower and upper bounds, $L_a(\boldsymbol{x})$ and $U_a(\boldsymbol{x})$, such that:
\begin{align}
L_a(\boldsymbol{x}) \leq \mu_a(\boldsymbol{x}) \leq U_a(\boldsymbol{x}). \label{eq:ptwsvalid}
\end{align}
It is important to note that the width of this interval, $[L_a(\boldsymbol{x}), U_a(\boldsymbol{x})]$, represents the identification uncertainty inherent to the problem structure. Unlike a statistical confidence interval, this width does not vanish even when given the full distribution $P(\mathcal{O})$.

Our learning framework is agnostic to the specific method used to derive these bounds, it requires only that the bounds are pointwise valid under the assumed causal structure, i.e., equation \eqref{eq:ptwsvalid} holds for all $\boldsymbol{x}$ and $a$ \cite{Jonzon2025}. However, the usefulness of the ITR we want to derive from these bounds depends heavily on the width of the intervals. Instrumental Variables (IVs) are frequently employed in this context as they can significantly narrow the bounds compared to assumption-free scenarios. 

The literature provides various techniques for deriving such bounds depending on the available information of the data. For instance if we have, after conditioning on $\boldsymbol{X}$, a valid DAG structure (which is not guaranteed for example if $\boldsymbol{X}$ is a collider), Manski and Pepper \cite{Manski2000} derive bounds under Monotone Instrumental Variable assumptions; Levis et al. \cite{Levis2025} formulate bounds in a potential outcome framework; and Balke and Pearl \cite{Balke1997} established seminal sharp symbolic bounds for binary outcomes with a binary instrument. We stress that by assuming the validity of the identification bounds, we implicitly adopt the assumptions required by the specific method used to derive them.

Even though our method works with any valid bounding method, in the following, we use the package \verb|causaloptim| by Sachs et al. \cite{Sachs2022} to allow for bounding of $\mu_a(\boldsymbol{x})$ for categorical treatments and IVs under a broad class of causal structures. This approach translates the constraints of the NPSEM into a linear optimization problem to solve symbolically for the narrowest possible bounds given the graph structure and constraints. While we employ this specific algorithm for our experiments, the framework presented in this paper remains valid for any set of identification intervals derived from a valid causal model.

\section{Partial Identification Learning for Categorical Treatment}
\label{sec:partial-identification-learning}
\subsection{Risk Optimality}
An ITR, or policy, is formally a measurable map $g: \mathcal{X} \to \mathcal{A}$ that assigns a treatment action to each subject based on their observed pre-treatment covariates. Let $\mathcal{D}$ denote the class of all such measurable functions.  The expected potential outcome under ITR $g$ is defined as $\mathbb{E}[Y(g(\boldsymbol{X}))] = \mathbb{E}[\mu_{g(\boldsymbol{X})}(\boldsymbol{X})]$.

Our objective is to find a policy that maximizes this value. Ideally, if the true conditional means $\mu_a(\boldsymbol{x})$ were known, the optimal policy would simply be:
\begin{align}
\label{eq:optimal_regime}
    g^*_O(\boldsymbol{x}) := \arg\max_{a \in \mathcal{A}} \mu_a(\boldsymbol{x}), \quad \forall \boldsymbol{x} \in \mathcal{X}.
\end{align}

Which would also minimize the oracle risk of a policy defined below.

\begin{definition}
The \textbf{oracle loss} of choosing action $a$ given covariates $\boldsymbol{x}$ is the difference between the conditional means of the best possible outcome (the oracle outcome) and the potential outcome under $a$:
\begin{align}
\label{eq:oracle-loss}
\mathcal{L}^*(\boldsymbol{x},a) = \max_{b \in \mathcal{A}} \mu_b(\boldsymbol{x}) - \mu_{a}(\boldsymbol{x}).
\end{align}
Let $\mathcal{D}$ be the set of measurable functions from $\mathcal{X}$ to $\mathcal{A}$. The \textbf{oracle risk} of a policy $g \in \mathcal{D}$ is:
\begin{align}
\label{eq:oracle-risk}
\mathcal{R}^*(g) := \mathbb{E}_{\boldsymbol{X}}[\mathcal{L}^*(\boldsymbol{X}, g(\boldsymbol{X}))],
\end{align}
\end{definition}

Since $\mathcal{R}^*$ relies on the unknown conditional mean of the potential outcome $\mu_a(\boldsymbol{X})$, we cannot optimize it directly. However, we can construct a conservative upper bound for it. As summarized in the previous section, partial identification allows us to bound $\mu_a(\boldsymbol{X})$ without the strong assumptions required for full identification, with its true value guaranteed to be  within $[L_a(\boldsymbol{X}), U_a(\boldsymbol{X})]$.

This range of possible values for the conditional potential outcome mean complicates the definition of loss and hence the optimization of the ITR. For binary treatments, Cui \cite{Cui2021} summarizes different decision-making criteria for selecting treatment under uncertainty due to unmeasured confounding. The concept of optimality under partial identification is related to a particular choice of decision-making criterion. 

First, the Optimist (Maximax) criterion maximizes the best case scenario ($\max_a U_a(\boldsymbol{X})$). This strategy is suitable for situations where high uncertainty is acceptable to achieve a potentially high reward. Second, the Pessimist (Maximin) criterion maximizes the worst-case lower bound ($\max_a L_a(\boldsymbol{X})$). This conservative approach prioritizes safety and the ``do no harm" principle, ensuring a guaranteed minimum outcome, but often at the cost of ignoring treatments that have high potential but wider bounds. Finally, the Opportunist (Minimax) criterion seeks to minimize the maximum possible loss of making a wrong decision. Rather than focusing solely on the best or worst absolute outcomes, it focuses on the loss relative to the optimal choice. 

In this work, we adopt the Opportunist perspective. We aim to protect against the worst-case error we might make relative to the unknown optimal treatment.

\begin{definition}
For $\boldsymbol{x} \in \mathcal{X}$ and an action $a \in \mathcal{A}$, the \textbf{worst-case loss} is:
\begin{align}
\label{eq:worst-case-loss}
\mathcal{L}(\boldsymbol{x}, a) := \max_{b \in \mathcal{A}\setminus \{a\}} U_b(\boldsymbol{x}) - L_a(\boldsymbol{x}).
\end{align}
Let $\mathcal{D}$ be the set of measurable functions from $\mathcal{X}$ to $\mathcal{A}$. The \textbf{worst-case risk} of a policy $g \in \mathcal{D}$ is:
\begin{align}
\label{eq:worst-case-risk}
\mathcal{R}(g) := \mathbb{E}_{\boldsymbol{X}}[\mathcal{L}(\boldsymbol{X}, g(\boldsymbol{X}))],
\end{align}
\end{definition}
Intuitively, $\mathcal{L}(\boldsymbol{x}, a)$ quantifies \textit{``If I choose treatment $a$, what is the most I could have lost by not choosing the best alternative?''}. It compares the lower bound of our choice $L_a$ against the upper bound of the best alternative $U_b$. If $\mathcal{L}(\boldsymbol{x}, a) \leq 0$, then action $a$ is better than all alternatives because its lower bound exceeds their upper bounds. This approach balances the risk of loss against the potential for gain, making it a robust choice in settings with high uncertainty where we wish to avoid errors without being too conservative. This idea is also illustrated in \cref{fig:worst-case-loss}.

Our objective is to find a policy $ g^* \in \mathcal{D} $ that minimizes the risk  \eqref{eq:worst-case-risk}.
\begin{definition}
Consider the infimum
\begin{align*}    
\inf_{g \in \mathcal{D}} \mathcal{R}(g).
\end{align*}
Any policy $ g \in \mathcal{D} $ that attains this infimum is called a minimax (or bound optimal) policy.
\end{definition}

\begin{restatable}{prop}{minimadecision}
\label{prop:minimax-decision}
Define the map:
\begin{align*}
    g^* \colon \mathcal{X} &\to \mathcal{A}\\
    \boldsymbol{x} &\mapsto \argmin_{a \in \mathcal{A}} \mathcal{L}(\boldsymbol{x}, a),
\end{align*}

where ties are broken according to a fixed ordering of the finite action set $\mathcal{A}$ (i.e., $g^*$ selects the smallest minimizer with respect to this ordering).
Then $g^*$ is a minimax rule and satisfies $\mathcal{R}(g^*) = \inf_{g \in \mathcal{D}} \mathcal{R}(g)$.
\end{restatable}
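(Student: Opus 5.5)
The plan is a pointwise-minimization argument followed by integration. The two things to check are that $g^*$ belongs to $\mathcal{D}$ (it is measurable) and that a pointwise minimizer of the loss minimizes the expected loss.

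\textbf{Step 1: measurability.} Assume, as is implicit in the setup, that each bound function $L_a$ and $U_a$ is measurable on $\mathcal{X}$. Then each $\mathcal{L}(\cdot,a)=\max_{b\neq a}U_b(\cdot)-L_a(\cdot)$ is measurable, being a finite maximum of measurable functions minus a measurable function. Fix the ordering $1<2<\dots<k$ of $\mathcal{A}$. For each $a$, the set
\begin{align*}
\{\boldsymbol{x}: g^*(\boldsymbol{x})=a\}=\bigcap_{b<a}\{\mathcal{L}(\boldsymbol{x},a)<\mathcal{L}(\boldsymbol{x},b)\}\cap\bigcap_{b>a}\{\mathcal{L}(\boldsymbol{x},a)\le\mathcal{L}(\boldsymbol{x},b)\}
\end{align*}
is a finite intersection of measurable sets. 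These sets partition $\mathcal{X}$, since every finite set has a smallest minimizer. Hence $g^*$ is measurable and $g^*\in\mathcal{D}$. This is the step needing the most care, because the tie-breaking rule is what makes $g^*$ a well-defined single-valued measurable selection.

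\textbf{Step 2: pointwise comparison.} Take any $g\in\mathcal{D}$. By construction, $\mathcal{L}(\boldsymbol{x},g^*(\boldsymbol{x}))=\min_{a}\mathcal{L}(\boldsymbol{x},a)\le\mathcal{L}(\boldsymbol{x},g(\boldsymbol{x}))$ for every $\boldsymbol{x}$. The right-hand side is measurable because $\mathcal{L}(\boldsymbol{x},g(\boldsymbol{x}))=\sum_a \mathds{1}\{g(\boldsymbol{x})=a\}\mathcal{L}(\boldsymbol{x},a)$.

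\textbf{Step 3: integration.} Taking expectations of the pointwise inequality gives $\mathcal{R}(g^*)\le\mathcal{R}(g)$ for all $g\in\mathcal{D}$. The only point to check is integrability: we need $\mathbb{E}|\mathcal{L}(\boldsymbol{X},a)|<\infty$ for every $a$, which holds for instance if the bounds are bounded (as for categorical $Y$ with finite support). In general, the inequality of expectations still holds in the extended sense whenever $\mathcal{R}(g)$ is well defined.

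\textbf{Conclusion.} Therefore $\mathcal{R}(g^*)\le\inf_{g\in\mathcal{D}}\mathcal{R}(g)$. Since $g^*\in\mathcal{D}$, the reverse inequality is trivial, so $\mathcal{R}(g^*)=\inf_{g\in\mathcal{D}}\mathcal{R}(g)$ and $g^*$ attains the infimum. By the definition above, $g^*$ is a minimax policy.
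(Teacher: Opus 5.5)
Your proposal is correct and follows essentially the same route as the paper: $g^*$ is pointwise optimal for $\mathcal{L}(\boldsymbol{x},\cdot)$, monotonicity of expectation then gives $\mathcal{R}(g^*)\le\mathcal{R}(g)$ for every $g\in\mathcal{D}$, and $g^*$ is measurable because $\mathcal{A}$ is finite and ties are broken deterministically. Your explicit description of the level sets $\{g^*=a\}$ and your integrability remark spell out what the paper's proof only asserts, namely that $L_a,U_a$ must be measurable and the expectations well defined.
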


As $g^*$ is the most straightforward candidate that satisfies the criterion of a minimax policy, it will serve as a benchmark for simulations and the theoretical analysis for learned methods.

The difference between minimizing the (worst-case) risk and the oracle risk is determined by the width of the partial identification bounds.

\begin{restatable}{prop}{riskerror}
\label{prop:risk-error}
Let $W(a,\boldsymbol{X}) := U_a(\boldsymbol{X})-L_a(\boldsymbol{X})$ be the width of the bounds. For any $g\in \mathcal{D}$:
    \begin{align*}
    & \quad \mathcal{R}(g)-\mathcal{R}^*(g)\\
    &\leq \mathbb{E} \left[\max_{b \in \mathcal{A} \setminus \{g(\boldsymbol{X})  \}} W(b,\boldsymbol{X})+W(g(\boldsymbol{X}),\boldsymbol{X})\right]\\
    &\leq  \mathbb{E}\left[ \sum_{a \in \mathcal{A}} W(a,\boldsymbol{X})\right].
    \end{align*}
\end{restatable}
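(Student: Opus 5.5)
The plan is to prove the first inequality pointwise in $\boldsymbol{x}$ and then take expectations; the second inequality is immediate. Fix $\boldsymbol{x}$ and write $a := g(\boldsymbol{x})$. By the definitions of the worst-case loss \eqref{eq:worst-case-loss} and the oracle loss \eqref{eq:oracle-loss},
\begin{align*}
\mathcal{L}(\boldsymbol{x},a)-\mathcal{L}^*(\boldsymbol{x},a) = \Big(\max_{b\neq a} U_b(\boldsymbol{x}) - \max_{b\in\mathcal{A}} \mu_b(\boldsymbol{x})\Big) + \big(\mu_a(\boldsymbol{x}) - L_a(\boldsymbol{x})\big).
\end{align*}
I will bound the two brackets separately, using only pointwise validity \eqref{eq:ptwsvalid}.

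The second bracket is handled by $\mu_a \le U_a$, which gives $\mu_a - L_a \le U_a - L_a = W(a,\boldsymbol{x})$.

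For the first bracket, first note that $\max_{b\in\mathcal{A}}\mu_b \ge \max_{b\neq a}\mu_b$, so the bracket is at most $\max_{b\neq a}U_b - \max_{b\neq a}\mu_b$. Next, let $b^\dagger$ attain $\max_{b\neq a}U_b$. Then
\begin{align*}
\max_{b\neq a}U_b - \max_{b\neq a}\mu_b \le U_{b^\dagger}-\mu_{b^\dagger} \le U_{b^\dagger}-L_{b^\dagger} = W(b^\dagger,\boldsymbol{x}) \le \max_{b\neq a}W(b,\boldsymbol{x}).
\end{align*}
The only point needing care is dropping $\max_b\mu_b$ down to the maximum over $b\neq a$; this is valid because subtracting a larger quantity can only decrease the expression. (If $k=1$ the maximum over the empty set needs a convention, but $k\ge2$ is implicit throughout.)

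Summing the two bounds gives the pointwise inequality
\begin{align*}
\mathcal{L}(\boldsymbol{x},g(\boldsymbol{x}))-\mathcal{L}^*(\boldsymbol{x},g(\boldsymbol{x})) \le \max_{b\neq g(\boldsymbol{x})}W(b,\boldsymbol{x})+W(g(\boldsymbol{x}),\boldsymbol{x}).
\end{align*}
Taking expectations over $\boldsymbol{X}$, using linearity of expectation and the definitions \eqref{eq:oracle-risk} and \eqref{eq:worst-case-risk}, yields the first inequality. Measurability and integrability should follow from measurability of $g$, $U_a$, $L_a$, and from boundedness of the bounds, since $Y$ is categorical.

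For the second inequality I use that all widths are nonnegative, because $L_a\le\mu_a\le U_a$. Hence $\max_{b\neq a}W(b,\boldsymbol{x})\le\sum_{b\neq a}W(b,\boldsymbol{x})$, and adding $W(a,\boldsymbol{x})$ gives $\sum_{a\in\mathcal{A}}W(a,\boldsymbol{x})$. Taking expectations finishes the proof. No step is a genuine obstacle; the only subtlety is the max-comparison in the first bracket.
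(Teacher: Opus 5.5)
Your proposal is correct and matches the paper's proof essentially step for step. You use the same pointwise decomposition into $\bigl(\max_{b\neq a}U_b-\max_b\mu_b\bigr)+(\mu_a-L_a)$, bound each bracket by a width, and take expectations; your maximizer $b^\dagger$ and your appeal to nonnegative widths just make explicit what the paper's chain $\max_{b\neq g(\boldsymbol{X})}U_b-\max_a\mu_a\le\max_{b\neq g(\boldsymbol{X})}[U_b-\mu_b]\le\max_{b\neq g(\boldsymbol{X})}W(b,\boldsymbol{X})\le\sum_a W(a,\boldsymbol{X})$ leaves implicit.
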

This shows that narrower bounds (for example, obtained by stronger IVs) directly shrink the gap between the estimable and the theoretically optimal policy.

\begin{remark}
\label{rmk: CATE-connection}
In the binary case ($k=2$), our formulation reduces to the frameworks stated in Cui \cite{Cui2021} and Pu \& Zhang \cite{puzhang2021}, which is only defined for a binary action space. Specifically, optimizing \cref{eq:worst-case-risk} is equivalent to optimizing the conditional average treatment effect (CATE) based ``risk function'' $\mathcal{R_{\text{upper}}}$ (Definition 1 of Pu \& Zhang \cite{puzhang2021}). See \cref{apx:proofs} for further elaboration. 
\end{remark}

\begin{figure}[ht]
    \centering
    \begin{tikzpicture}[scale=0.7, every node/.style={font=\small}]
        \draw[->] (0,0)-- (14.5,0); 
        
        \foreach \x in {2, 3, 6, 7, 8, 10, 11, 13} {
            \draw (\x,0.1) -- (\x,-0.1) node[below] {\x};
        }
    
        \foreach \L/\R/\Y/\P/\Lab in {
              2/8/1/3/4,
              2/7/2/6/3,
              6/10/3/6/2,
              11/13/4/13/1
        }{
            \draw (\L,\Y) -- (\R,\Y);
            \draw[thick] (\L,\Y-0.2) -- (\L,\Y+0.2);
            \draw[thick] (\R,\Y-0.2) -- (\R,\Y+0.2);
            \fill (\P,\Y) circle (0.1);
            \node[right] at (\R+0.2,\Y) {Treatment $\Lab$};
        }
    \end{tikzpicture}
    
    \caption{Visualization of the worst-case loss. Circles denote the (unknown) true means $\mu_a$, whereas bars denote bounds $[L_a, U_a]$. Treatment 1 is clearly optimal ($L_1 > U_{\max\{2,3,4\}}$), its worst-case loss is negative as well as the oracle loss. 
    If we disregard treatment 1, the bounds of the remaining treatments $\{2,3,4\}$ overlap. In this case, treatment $4$ has a worst-case loss of $\max(U_2, U_3) - L_4 = 8$ (and oracle loss of $3$). Selecting 2 has a worst-case loss of $\max(U_3, U_4) - L_2 = 8-6=2$ (and oracle loss of $0$).} 
    \label{fig:worst-case-loss}
\end{figure}

While the minimax policy $g^*$ minimizes the worst case risk, directly optimizing the empirical counterpart of $\mathcal{R}(g)$ is computationally difficult due to discontinuities caused by jumps of the worst-case loss, leading to undefined or vanishing gradients. This prevents the use of efficient optimization methods using gradient descent.

In their binary treatment setting $(k=2)$, Pu et al.  \cite{puzhang2021} address this by first reformulating the empirical minimax problem as a weighted classification problem with an indicator loss, and then replace the loss containing indicator functions by a surrogate loss (hinge-loss) which reformulating the problem as a weighted Support Vector Machine (SVM).
However, common multiclass extensions of SVMs, such as solving a series of binary problems by solving one class vs all the others (One vs Rest), are not suited for this setting. First, One vs Rest decomposes the problem into $k$ independent binary tasks. This does not reflect our objective: minimizing $k$ independent binary losses does not imply minimizing the worst-case loss $\mathcal{L}(\boldsymbol{X}, a)$, which relies on the relative comparison between the chosen action and the best alternative. Second, geometrically, One vs Rest approaches often result in ambiguous decision regions where multiple classifiers predict ``positive" or all predict ``negative" \cite{Lee2004}. Finally, these methods typically require learning $k$ functions to model a decision boundary with only $k-1$ degrees of freedom, resulting in a redundancy (the probabilities of each class have to sum to one) and a lack of the symmetric geometric interpretation.

\section{Symmetric Multicategory Embedding}
\label{sec:symetric-embedding}

To overcome the limitations of disjoint binary classifiers, we adopt the multicategory angle-based large-margin classification framework \cite{Zhang2014}. The core motivation is to move from a set of independent decision boundaries to a single, unified latent space representation. 

Since categorical treatments have no inherent ordinal structure, the decision space should treat all $k$ actions symmetrically. We achieve this by embedding the actions into the vector space $\mathbb{R}^{k-1}$ as the vertices of a regular simplex. This representation is minimal in the sense that it is using exactly $k-1$ dimensions to represent the degrees of freedom among $k$ classes and ensures that the decision boundaries are geometrically consistent, avoiding the redundancies common in One vs Rest approaches \cite{Lee2004}.

Following \cite{Zhang2014, Zhang2020}, each treatment level $a \in \mathcal{A}$ is enumerated from $1$ to $k$ and represented by a fixed vector $\boldsymbol{w}_a:=\phi(a)\in \mathbb{R}^{k-1}$ through the map:

\begin{align}
\label{eq: immersion}
\phi \colon \{1, \dots, k\} &\to \mathbb{R}^{k-1}\\
a &\mapsto \begin{cases}
(k-1)^{-1/2}\zeta, & a=1, \\
-\frac{1+k^{1/2}}{(k-1)^{3/2}}\zeta + \left(\frac{k}{k-1}\right)^{1/2}e_{a-1}, & 2 \leq a \leq k, \notag
\end{cases}    
\end{align}
where $\zeta = (1,\dots,1)$ is a vector of length $k-1$ with all elements set to 1, while $e_j$ is the $j$-th standard basis vector in $\mathbb{R}^{k-1}$ (i.e., its $j$-th element is 1 and all others elements are 0).

The resulting set of action vectors $\{\boldsymbol{w}_1, \dots, \boldsymbol{w}_k\}$, are the vertices of a regular simplex with $k$ vertices in a $(k-1)$-dimensional space. This simplex is centered at the origin, and each vertex vector $\boldsymbol{w}_j$ has a Euclidean norm of one and the angle between any two action vectors is equal \cite{Zhang2014}. 

Using this simplex embedding, the learning objective is to find a mapping $f$ that maps the vector of covariates $\boldsymbol{x}$ into this latent space in a way that reflects treatment optimality in terms of the angular alignment. We look for a function $f$ such that for a subject with covariates $\boldsymbol{x}$, the mapping $f(\boldsymbol{x})$ points in the direction of the vertex $\boldsymbol{w}_a$ corresponding to the best treatment. This formulation relaxes the hard, discrete problem of treatment selection into a smooth optimization problem, where the optimal policy is induced by maximizing the angular alignment between the embedded covariate vectors and the vectors representing the different levels of the treatment.

The decision rule for a given embedding $f$ is then given by:
\begin{align}
\label{eq:policy}
g_f(\boldsymbol{x}) := \arg\max_{a \in \mathcal{A}} \langle \boldsymbol{w}_a, f(\boldsymbol{x}) \rangle, \mbox{ for all } \boldsymbol{x} \in \mathcal{X},
\end{align}
where $\langle \cdot, \cdot \rangle$ is the euclidean inner product.

By maximizing the inner product, we essentially assign $\boldsymbol{x}$ to the action vector with the smallest angle to $f(\boldsymbol{x})$ in the latent space. This avoids the redundancy of sum to zero constraints found in other multiclass classifiers while maintaining a geometric interpretation. The setup is illustrated in \cref{fig: illustration}.
\begin{figure}
    \centering

    \begin{tikzpicture}[
        >=Latex,
        line cap=round,
        line join=round,
        every node/.style={font=\small},
        dot/.style={circle, fill=black, inner sep=1.6pt}
    ]

    \draw (-4, 3) circle (1);
    \node at (-4,2.5) {$\mathcal{X}$};
    \node[dot] (xpt) at (-3.5,3.1) {};
    \node[left] at (-3.5,3.1) {$\boldsymbol{X}$};

    \draw[->] (-2.8,3.1) to[out=15,in=165] node[midway, above] {$g$} (2.8,3.1);
    \draw[->] (-4,1.5) to[out=270,in=180] node[midway, below] {$f$} (-1.5, 0);
    \draw[->] (1.5, 0) to[out=0,in=270]node[midway, below right, align=center] {$\arg\max\limits_{a\in\mathcal{A}}\,\langle w_a,f(\boldsymbol{X})\rangle$ \\ $= g(\boldsymbol{X})$} (4, 1.5);
    \draw[->] (2.8,2.45) to[out=190,in=50] node[midway, above left] {$\phi$} (0,1.5);
    
    \draw (4,3) circle (1);
    \node at (4, 2.5) {$\mathcal{A}$};
    
    \node[dot,label=left:$a_1$] at (3.8,3.1) {};
    \node[dot,label=right:$a_2$] at (4.2,3.5) {};
    \node[dot,label=right:$a_3$] at (4.3,2.8) {};

    \coordinate (o) at (0,0);
    
    \draw[->,gray!70] (-1,0) -- (1,0);
    \draw[->,gray!70] (0, -1) -- (0,1);
    
    \draw[thick,->] (o) -- ++(0.707, 0.707) node[above right] {$w_1$};
    \draw[thick,->] (o) -- ++(0.259, -0.966) node[below] {$w_2$};
    \draw[thick,->] (o) -- ++(-0.966, 0.259) node[above left] {$w_3$};
    
    \node[dot,label=above:$f(\boldsymbol{X})$] at (-0.7,0.7) {};
    \node[below] at (-1,-0.5) {$\mathbb R^{k-1}$};

    \end{tikzpicture}

    \caption{Illustration of the simplex embedding for a treatment with $k=3$ levels. The treatment levels $a_1,a_2,a_3\in\mathcal{A}$ are mapped by $\phi$ to the vectors $w_1,w_2,w_3\in\mathbb R^{k-1}=\mathbb R^2$, which form the vertices of a regular simplex. A vector of covariates $\boldsymbol{X}\in\mathcal{X}$ is mapped by $f$ into the same latent space, and the induced policy assigns the treatment level with maximal alignment, $g(\boldsymbol{X})=\arg\max_{a \in \mathcal{A}}\langle w_a,f(\boldsymbol{X}) \rangle$.}    
    \label{fig: illustration}
\end{figure}

\subsection{Weighted Surrogate Loss}

Our objective is to find a function $f$ such that the induced policy $g_f$ minimizes the worst-case risk $\mathcal{R}(g)$. However, minimizing the empirical risk directly is infeasible due to the discontinuity of the $\max$ operator in \eqref{eq:worst-case-loss}.

To address this, we construct a smooth, differentiable surrogate loss by considering a weighted aggregation of the worst-case loss. We use the softmax function to determine the relative weight $p_a(\boldsymbol{x}, f)$ assigned to each action's worst-case loss based on the embedding $f(\boldsymbol{x})$:
\begin{align}
\label{eq:softmax-weights}
p_a(\boldsymbol{x}, f) = \frac{\exp(\langle \boldsymbol{w}_a, f(\boldsymbol{x}) \rangle)}{\sum_{b=1}^k \exp(\langle \boldsymbol{w}_b, f(\boldsymbol{x}) \rangle)}.
\end{align}
These weights $p_a$ dictate how much the worst-case loss of action $a$ contributes to the surrogate risk. An embedding $f(\boldsymbol{x})$ that is well-aligned with $\boldsymbol{w}_a$ results in a large weight $p_a \approx 1$, meaning the surrogate loss at that point is dominated by the worst-case loss of action $a$. Conversely, misaligned actions receive negligible weights.

We define the surrogate loss as the expected worst-case loss under this weighting. For the estimation we substitute the worst-case loss $\mathcal{L}(\boldsymbol{x},a)$ with the empirical worst-case loss $\hat{\mathcal{L}}(\boldsymbol{x},a)$. 

\begin{definition}
Let $\boldsymbol{\mathcal{L}}(\boldsymbol{x)} := (\mathcal{L}(\boldsymbol{x}, 1), \dots, \mathcal{L}(\boldsymbol{x}, k))^T$ and $\boldsymbol{p}(\boldsymbol{x},f):= (p_1(\boldsymbol{x}, f) , \dots , p_k(\boldsymbol{x}, f))$. For a function $f$ we define:
\begin{align}
\label{eq:loss-function}
\mathcal{L}^s(\boldsymbol{x}, f) &\coloneqq \boldsymbol{p}(\boldsymbol{x},f)\cdot \boldsymbol{\mathcal{L}(x)}  &\text{(Surrogate Loss)}
\\
\mathcal{R}^s(f) &\coloneqq \mathbb{E}[\mathcal{L}^s(\boldsymbol{X}, f)] & \text{(Surrogate Risk)}\\
\label{eq:emp-loss-function}
\hat{\mathcal{R}}^s(f) &:= \frac{1}{N} \sum_{i=1}^N \left[ \sum_{a=1}^k p_a(\boldsymbol{X}_i, f) \hat{\mathcal{L}}(\boldsymbol{X}_i, a) \right] + \lambda \Omega(f) & \text{(Empirical Regularized Surrogate Risk)},
\end{align}
where $\Omega(f)$ is a regularization term and $\lambda$ a tuning parameter.
\end{definition}
This surrogate risk forces the model to assign higher weights (and thus larger inner products) to actions with lower worst-case losses. We note that the surrogate risk is $m$ times differentiable with respect to the parametrization of $f$,  whenever the regularization term $\Omega(f)$ is $m$ times differentiable for some $m\in \mathbb{N} \cup\ \{\infty\}$. However the surrogate loss is not convex in general. We note that the surrogate loss in \cref{eq:loss-function} is a special case of the batch policy optimization objective considered by \textcite{Chen2019}, where the action specific reward is given by the negative worst-case loss. In their setting, the nonconvexity is addressed by adding entropy regularization to the expected risk and optimizing a related strongly convex objective. A similar construction may also be useful in our setting.


We summarize the proposed estimation procedure in Algorithm \ref{alg:cpil}. 
The procedure is agnostic both to the method of bounding the partial identification intervals and to the choice of learner used to estimate the embedding function $f$. In particular, $f$ may be estimated using a kernel method, a neural network, or any other suitable method.

\subsection{Kernelized Formulation}
\label{sec:kernelized-formulation}

We model the embedding function $f$ within a Reproducing Kernel Hilbert Space (RKHS), since optimal treatments often depend on complex interactions between patient covariates. In addition to its ability to be flexible enough to handle these nonlinear interaction, the RKHS norm enforces smoothness, effectively regularizing the policy against the sharp, conservative ``spikes" induced by partial identification bounds. Crucially, this framework allows us to apply the Representer Theorem, reducing the optimization from a infinite dimensional function space into a finite dimensional problem.

Let $\mathcal{H}$ be a RKHS of real-valued functions with a positive definite kernel $k(\cdot, \cdot)$ and induced inner product $\langle \cdot, \cdot\rangle_{\mathcal{H}}$. Since our target embedding $f: \mathcal{X} \to \mathbb{R}^{k-1}$ is vector-valued, we construct the product space $\mathcal{H}^{k-1} = \mathcal{H} \times \cdots \times \mathcal{H}$. We equip this space with the inner product defined as the sum of the component wise inner products:
\begin{align*}
\langle f, g \rangle_{\mathcal{H}^{k-1}} := \sum_{j=1}^{k-1} \langle f^j, g^j\rangle_\mathcal{H}, \quad \text{for } f,g \in \mathcal{H}^{k-1}.
\end{align*}
The induced norm is given by $\| f \|_{\mathcal{H}^{k-1}} = \sqrt{\langle f, f \rangle_{\mathcal{H}^{k-1}}}$.

We decompose the embedding $f$ into $k-1$ components, $f(\boldsymbol{X})=(f^1(\boldsymbol{X}),\dots, f^{k-1}(\boldsymbol{X}))$, and restrict each component $f^j$ to lie in $\mathcal{H}$. We define the regularization term $\Omega(f)$ as the squared norm in the product space:
\begin{align}
\Omega(f) := \| f \|^2_{\mathcal{H}^{k-1}} = \sum_{j=1}^{k-1} \| f^j \|^2_\mathcal{H}.
\end{align}

Since the regularization term is additive across components, we can apply the Representer Theorem (see Schölkopf et al. \cite{Scholkopf2001}, Theorem 4.2) to each component $f^j$ individually. Specifically, fixing the other components does not influence the optimization of $f^j$. Consequently, the minimizer for each component $j = 1, \dots ,k-1$ admits the finite expansion:
\begin{align}
f^j(X) = \sum_{i=1}^N \alpha_{ij} k(X_i, X),
\end{align}
where $\alpha_{ij} \in \mathbb{R}$ are learnable coefficients.

This allows us to rewrite the optimization problem in terms of a coefficient matrix. Let $\boldsymbol{A} \in \mathbb{R}^{N \times (k-1)}$ be the matrix of coefficients of $\alpha_{ij}$ and denote $\boldsymbol{\alpha}_j \in \mathbb{R}^N$ the $j$-th column of $\boldsymbol{A}$. Furthermore, let $\mathbb{K} \in \mathbb{R}^{N \times N}$ be the kernel Gram matrix with entries $\mathbb{K}_{lm} = k(\boldsymbol{X}_l, \boldsymbol{X}_m)$, and let $\boldsymbol{k}_i \in \mathbb{R}^N$ denote the $i$-th column of $\mathbb{K}$.

Using the expansion $f^j(\boldsymbol{X}_l) = \sum_{i=1}^N \alpha_{ij} k(\boldsymbol{X}_i, \boldsymbol{X}_l) = \boldsymbol{\alpha}_j^T \boldsymbol{k}_l$, we can derive the kernelized form of the regularization term using the reproducing property:
\begin{align}
\Omega(f) &= \sum_{j=1}^{k-1} \| f^j \|^2_\mathcal{H} \nonumber 
= \sum_{j=1}^{k-1} \sum_{i=1}^N \sum_{l=1}^N \alpha_{ij} \alpha_{lj} \langle k(\boldsymbol{X}_i, \cdot), k(\boldsymbol{X}_l, \cdot) \rangle_\mathcal{H} \nonumber \\
&= \sum_{j=1}^{k-1} \boldsymbol{\alpha}_j^T \mathbb{K} \boldsymbol{\alpha}_j \nonumber 
= \operatorname{Tr}(\boldsymbol{A}^T \mathbb{K} \boldsymbol{A}).
\end{align}

Substituting these expressions into our objective, we arrive at the final finite-dimensional optimization problem. Let $\hat{\mathcal{L}}_{ia} = \hat{\mathcal{L}}(\boldsymbol{X}_i, a)$ be the precomputed worst-case loss. The kernelized surrogate risk $\hat{\mathcal{R}}^s$ to minimize with respect to $\boldsymbol{A}$ is:
\begin{align}
\label{eq:ker_risk}
\min_{\boldsymbol{A} \in \mathbb{R}^{N \times (k-1)}} \quad \frac{1}{N} \sum_{i=1}^N \left( \frac{\sum_{a=1}^k \hat{\mathcal{L}}_{ia} \exp\left( \boldsymbol{w}_a^T \boldsymbol{A}^T \boldsymbol{k}_i \right)}{\sum_{b=1}^k \exp\left( \boldsymbol{w}_b^T \boldsymbol{A}^T \boldsymbol{k}_i \right)} \right) + \lambda \operatorname{Tr}(\boldsymbol{A}^T \mathbb{K} \boldsymbol{A}).
\end{align}

Since the risk in this setting is smooth we can employ gradient based optimization such as the L-BFGS algorithm \parencite{Liu1989} which has been used in the simulations  in \cref{sec:experiments}. 

\subsection{Neural Network Parameterization}
\label{sec:neural-network}

While the kernelized formulation in \cref{sec:kernelized-formulation} provides theoretical guarantees via the finite representation due to the Representer Theorem, it scales poorly to massive datasets. Evaluating the kernelized policy requires computing the kernel function between any new data point and every sample in the training set, leading to inference times that scale linearly with the training size $N$. Furthermore, constructing and storing the $N \times N$ Gram matrix $\mathbb{K}$ requires $\mathcal{O}(N^2)$ memory.

To scale our categorical partial identification framework, we alternatively model the embedding function $f: \mathcal{X} \to \mathbb{R}^{k-1}$ using a deep neural network. Let $f_\theta(\boldsymbol{X})$ denote a Multilayer Perceptron (MLP) \parencite{Bengio2016} parameterized by weights $\theta$. Rather than projecting the data into an implicit reproducing kernel Hilbert space, the neural network learns an explicit, nonlinear mapping into the $(k-1)$-dimensional simplex space. Because the network's parameters are fixed after training, inference time is entirely independent of $N$. 

We optimize the parameters $\theta$ by directly minimizing the empirical surrogate loss $\hat{\mathcal{L}}(f_\theta)$ (\cref{eq:emp-loss-function}) using stochastic gradient descent.

\begin{algorithm}
\caption{Pseudo-algorithm for Partial Identification Learning}
\label{alg:cpil}
\begin{algorithmic}[1]
\Statex \textbf{Input:} Observed data containing covariates $\boldsymbol{X}_i$, action $A_i$, and outcome $Y_i$, together with any additional observed variables (such as IVs) and assumptions required to construct valid partial identification bounds; action set $\mathcal{A}=\{1,\dots,k\}$; simplex embedding vectors $\{\boldsymbol{w}_a\}_{a=1}^k$; regularization parameter $\lambda$.

\Statex \textbf{Output:} Estimated policy $\hat g:\mathcal{X}\to\mathcal{A}$.

\Statex

\State Split the data into two disjoint sets $I=I_1 \sqcup I_2$. Using a valid bounding procedure and train models that estimate bounds for $\hat L_a(\boldsymbol{x})$ and $\hat U_a(\boldsymbol{x})$ using $I_1$ for each action $a\in\mathcal{A}$ and $\boldsymbol{x} \in \mathcal{X}$.

\State For each datapoint in $I_2$ and each action, compute the empirical worst-case loss
\begin{align*}
\hat{\mathcal{L}}(\boldsymbol{X}_i,a) := \max_{b\in\mathcal{A}\setminus\{a\}} \hat U_b(\boldsymbol{X}_i)-\hat L_a(\boldsymbol{X}_i).
\end{align*}

\State Using $I_2$ learn $\hat{f}_s:\mathcal{X}\to\mathbb{R}^{k-1}$ by minimizing the empirical surrogate risk
\begin{align*}
\hat{\mathcal{R}}^s(f)\coloneqq \frac{1}{|I_2|}\sum_{i\in I_2}\sum_{a=1}^kp_a(\boldsymbol{X}_i,f)\,\hat{\mathcal{L}}(\boldsymbol{X}_i,a)+\lambda \Omega(f),
\end{align*}
where
\begin{align*}
p_a(\boldsymbol{X},f)=\frac{\exp(\langle \boldsymbol{w}_a,f(\boldsymbol{X})\rangle)}{\sum_{b=1}^k \exp(\langle \boldsymbol{w}_b,f(\boldsymbol{X})\rangle)}.
\end{align*}

\State Return the induced policy
\begin{align*}
\hat g(\boldsymbol{X})\coloneqq\arg\max_{a\in\mathcal{A}} \langle \boldsymbol{w}_a,\hat{f}_s(\boldsymbol{X})\rangle.
\end{align*}
\end{algorithmic}
\end{algorithm}

\section{Theoretical Properties}
\label{sec:theoretical_properties}
We now establish theoretical guarantees for the proposed learning framework. First, we establish that minimizing the surrogate loss \eqref{eq:loss-function} yields the optimal policy. We then derive a finite sample convergence rates via an oracle inequality that accounts for both the estimation of the embedding function $f$ and the estimation of the identification bounds.

\begin{restatable}{thm}{boundingexessloss}
\label{thm:bounding-excess-loss}
Let $\mathcal{G}$ be the class of all measurable functions $f: \mathcal{X} \to \mathbb{R}^{k-1}$. Assume that:
\begin{enumerate}
    \item For almost every $\boldsymbol{X} \in \mathcal{X}$, the minimax policy $g^*(\boldsymbol{X}) = \argmin_{a \in \mathcal{A}} \mathcal{L}(\boldsymbol{X}, a)$ is unique.
    \item The maximum loss is integrable: $\mathbb{E}[\max_{a \in \mathcal{A}} |\mathcal{L}(\boldsymbol{X}, a)|] < \infty$.
\end{enumerate}
Then, the infimum of the surrogate risk is the worst-case risk of the minimax policy:
\begin{align*}
\inf_{f \in \mathcal{G}} \mathcal{R}^s(f) = \mathcal{R}(g^*).
\end{align*}
Furthermore, for any $f \in \mathcal{G}$ the excess risk is bounded by the excess surrogate risk 
\begin{align*}
    0 \leq \mathcal{R}(g_f) - \mathcal{R}(g^*) \leq k \cdot \left( \mathcal{R}^s(f) - \inf_{f' \in \mathcal{G}} \mathcal{R}^s(f') \right),
 \end{align*}
where $k=\vert\mathcal{A}\vert$.
\end{restatable}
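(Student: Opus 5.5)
The plan is to work pointwise in $\boldsymbol{x}$ and then integrate. Write $\mathcal{L}_a := \mathcal{L}(\boldsymbol{x},a)$ and $\mathcal{L}_{\min}(\boldsymbol{x}) := \min_{a\in\mathcal{A}} \mathcal{L}_a = \mathcal{L}(\boldsymbol{x}, g^*(\boldsymbol{x}))$. The key observation is that $\boldsymbol{p}(\boldsymbol{x},f)$ is a probability vector, since its entries are positive and sum to one. Hence the surrogate loss $\mathcal{L}^s(\boldsymbol{x},f)$ is a convex combination of $\mathcal{L}_1,\dots,\mathcal{L}_k$, and so $\mathcal{L}^s(\boldsymbol{x},f)\ge \mathcal{L}_{\min}(\boldsymbol{x})$ for every $f$ and $\boldsymbol{x}$. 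Integrating this (the integrals exist by assumption 2, because $|\mathcal{L}^s|\le \max_a|\mathcal{L}_a|$) gives $\mathcal{R}^s(f)\ge \mathcal{R}(g^*)$ for all $f\in\mathcal{G}$. By \cref{prop:minimax-decision}, $\mathcal{R}(g^*)=\inf_g\mathcal{R}(g)$, so in particular $\mathcal{R}(g_f)\ge\mathcal{R}(g^*)$, which is the left inequality $0\le\mathcal{R}(g_f)-\mathcal{R}(g^*)$.

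For the reverse inequality $\inf_f \mathcal{R}^s(f)\le \mathcal{R}(g^*)$, I will use the explicit sequence $f_n(\boldsymbol{x}) := n\,\boldsymbol{w}_{g^*(\boldsymbol{x})}$. It is measurable because $g^*$ is a measurable argmin with a fixed tie-breaking rule. The simplex geometry gives $\langle \boldsymbol{w}_a,\boldsymbol{w}_a\rangle=1$ and $\langle \boldsymbol{w}_a,\boldsymbol{w}_b\rangle=-1/(k-1)$ for $a\neq b$. Therefore
\[
p_{g^*(\boldsymbol{x})}(\boldsymbol{x},f_n)=\Bigl(1+(k-1)e^{-nk/(k-1)}\Bigr)^{-1}\to 1,
\]
and the remaining weights tend to $0$. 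It follows that $\mathcal{L}^s(\boldsymbol{x},f_n)\to \mathcal{L}_{\min}(\boldsymbol{x})$ pointwise. The integrable envelope $\max_a|\mathcal{L}_a|$ from assumption 2 lets dominated convergence pass this limit through the expectation, giving $\mathcal{R}^s(f_n)\to\mathcal{R}(g^*)$. Combined with the lower bound, this yields $\inf_{f\in\mathcal{G}}\mathcal{R}^s(f)=\mathcal{R}(g^*)$. Assumption 1 is not really needed here; it only guarantees that $g^*$ is canonical almost everywhere.

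For the excess-risk bound, I aim for the pointwise inequality
\[
\mathcal{L}(\boldsymbol{x},g_f(\boldsymbol{x}))-\mathcal{L}_{\min}(\boldsymbol{x}) \le k\bigl(\mathcal{L}^s(\boldsymbol{x},f)-\mathcal{L}_{\min}(\boldsymbol{x})\bigr).
\]
Expanding the right-hand side gives $\mathcal{L}^s-\mathcal{L}_{\min}=\sum_a p_a(\mathcal{L}_a-\mathcal{L}_{\min})$, in which every summand is nonnegative. Dropping all terms except $a=g_f(\boldsymbol{x})$ leaves $p_{g_f(\boldsymbol{x})}\bigl(\mathcal{L}(\boldsymbol{x},g_f(\boldsymbol{x}))-\mathcal{L}_{\min}\bigr)$. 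Softmax is monotone in the inner products, so the action $g_f(\boldsymbol{x})$ that maximizes $\langle\boldsymbol{w}_a,f(\boldsymbol{x})\rangle$ also carries the largest weight, whichever maximizer is chosen in case of ties. Since the $k$ weights sum to one, this gives $p_{g_f(\boldsymbol{x})}\ge 1/k$. Multiplying by $k$ proves the pointwise inequality. Taking expectations and substituting $\inf_{f'}\mathcal{R}^s(f')=\mathcal{R}(g^*)$ from the first part gives the claimed bound.

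The main technical care lies in the limiting argument, namely checking measurability of $f_n$ and the domination step. The excess-risk step itself is elementary once one notices that $p_{g_f}\ge 1/k$.
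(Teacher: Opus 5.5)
Your proposal is correct and follows the paper's proof essentially step by step. The shared steps are the convex-combination lower bound $\mathcal{L}^s \geq \min_a \mathcal{L}$, the scaled sequence $f_m = m\,\boldsymbol{w}_{g^*(\boldsymbol{X})}$ with dominated convergence under assumption 2, and the pointwise excess bound via $p_{g_f(\boldsymbol{X})}(\boldsymbol{X},f) \geq 1/k$. Your extra remarks are also sound: you justify the left inequality $0 \leq \mathcal{R}(g_f)-\mathcal{R}(g^*)$ through the minimax proposition, which the paper leaves implicit, and you are right that assumption 1 is not needed because the sequence is built from the tie-broken $g^*$.
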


Theorem \ref{thm:bounding-excess-loss} justifies the use of $\mathcal{R}^s(f)$ as a valid objective: convergence to the minimum of the surrogate risk guarantees convergence to the optimal worst-case risk policy.

To decouple the estimation of the identification bounds from the optimization step, we define $\hat{f}_s$ using sample splitting. This independence is used in the proof of the oracle inequality to control the plug-in error introduced by $\hat{\mathcal{L}}$. 
One subsample is used to estimate $\hat{L}_a,\hat{U}_a$ (equivalently $\hat{\mathcal{L}}$), and the other to minimize the empirical surrogate risk. The oracle inequality below is stated for this sample splitting estimator.

To derive an oracle inequality for an arbitrary normed vector space of functions $\mathcal{F}$, in which the estimated function is compared to the best achievable element in the space, we make the following assumptions:

\begin{assumption}[Boundedness]
\label{asm: bounds-bounds}
There exists a constant $M > 0$ such that for all $a \in \mathcal{A}$ and almost all $\boldsymbol{X} \in \mathcal{X}$:

\begin{align*}    
|L_a(\boldsymbol{X})|, |U_a(\boldsymbol{X})|, |\hat{L}_a(\boldsymbol{X})|, |\hat{U}_a(\boldsymbol{X})|  \leq M.
\end{align*}
\end{assumption}

\begin{assumption}[Entropy of the function class]
\label{asm: entropy}
Let $\mathcal{B}_0 = \{f \in \mathcal{F}: \Vert f \Vert_{\mathcal{F}} \leq 1\}$ be the unit ball in $\mathcal{F}$. We assume there exists a constant $v \in (0, 2)$ such that the covering number of $\mathcal{B}_0$ satisfies:
\begin{align*}
    \sup_Q \log N(\epsilon, \mathcal{B}_0, L_2(Q)) \leq O(\epsilon^{-v}), \quad \text{for all } \epsilon > 0,
\end{align*}
where the supremum is taken over all discrete probability measures $Q$. Here, $N(\varepsilon, \mathcal{B}_0, L_2(Q))$ denotes the covering number, which is the minimal number of closed $\varepsilon$-balls in the $L_2(Q)$ norm required to cover $\mathcal{B}_0$
\end{assumption}

\begin{assumption}[Existence of minimizer]
There exists $f^*_\lambda \in \mathcal{F}$ such that
\label{asm: minimizer}
    $$\mathcal{R}^s(f^*_\lambda) + \lambda \Vert f^*_\lambda \Vert_\mathcal{F}^2 = \inf_{f \in \mathcal{F}} \mathcal{R}^s(f) + \lambda \Vert f\Vert_\mathcal{F}^2.
    $$ 
\end{assumption}

\begin{assumption}[Existence of empirical minimizer]
There exists $\hat{f}_s \in \mathcal{F}$ such that
\label{asm:emp-minimizer}
    $$\hat{\mathcal{R}}^s(\hat{f}_s) = \inf_{f \in \mathcal{F}} \hat{\mathcal{R}}^s(f), 
    $$ 
\end{assumption}

\begin{assumption}[Convergence rate of the bounds]
\label{asm: convergence}
Assume that for any $a \in \mathcal{A}$, the estimated bounds $\hat{L}_a$ and $\hat{U}_a$ satisfy
\begin{align*}
\mathbb{E}\left[ \left\vert \hat{L}_a(\boldsymbol{X}) -L_a(\boldsymbol{X})\right\vert \right] &= O(n^{-\alpha}), \\
\mathbb{E}\left[ \left\vert \hat{U}_a(\boldsymbol{X}) -U_a(\boldsymbol{X})\right\vert \right] &= O(n^{-\beta}),
\end{align*}
for some $\alpha, \beta > 0$
\end{assumption}

\begin{assumption}[Supremum norm bound]
\label{asm: norm}
Assume that for any $f \in \mathcal{F}$, there exists a constant $\kappa>0$ such that
$$\Vert f \Vert_\infty \leq \kappa \Vert f \Vert_\mathcal{F},
$$ where the supremum norm is defined as $\Vert f \Vert_{\infty}=\sup_{\boldsymbol{X}\in \mathcal{X}} \max_j(\vert f^j(\boldsymbol{X}) \vert)$.
\end{assumption}

Assumptions \ref{asm: entropy}, \ref{asm: minimizer}, and \ref{asm: norm} are standard regularity conditions in the analysis of kernel based methods (see Steinwart \& Christmann, \cite{Steinwart2008}, Ch. 6-7). Assumptions \ref{asm: bounds-bounds} and \ref{asm: convergence} are specific to the partial identification setting. They ensure that the estimation error of the identification bounds does not dominate the learning rate. Similar conditions have been used by \cite{puzhang2021}.

Before stating the theorem, define the regularization approximation error
\begin{align*}
a_{\mathcal{F}}(\lambda) \coloneqq \inf_{f\in\mathcal{F}}
\{\mathcal{R}^s(f)+\lambda\|f\|_{\mathcal{F}}^2\}-\inf_{f\in\mathcal{F}}\mathcal{R}^s(f).
\end{align*}
Note that $a_{\mathcal{F}}(\lambda) \to 0$ as $\lambda \to 0$, since a vanishing regularization term does not change the infimum of the surrogate risk.

\begin{restatable}{thm}{oracleinequality}
\label{thm:oracle-inequality}
    Let $\mathcal{F}$ be a vector space of measurable functions from $\mathcal{X} \to \mathbb{R}^{k-1}$, equipped with an inner product and its induced norm. Under \cref{asm: bounds-bounds}-\ref{asm: norm} and a regularizer satisfying $0<\lambda \leq 1$ we have:
    \begin{align*}
    \mathcal{R}^s(\hat{f}_s) - \inf_{f\in \mathcal{F}} \mathcal{R}^s(f)\leq a_{\mathcal{F}}(\lambda) +  O( (n\lambda)^{-\frac{1}{2}}+ n^{-\alpha}+n^{-\beta}).
    \end{align*}
\end{restatable}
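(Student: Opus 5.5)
The plan is the standard decomposition for regularized empirical risk minimization with a plug-in loss. Write $\mathcal{R}^s_\lambda(f) := \mathcal{R}^s(f)+\lambda\|f\|_{\mathcal{F}}^2$, and let $\tilde{\mathcal{R}}^s(f)$ denote the population surrogate risk computed with the estimated losses $\hat{\mathcal{L}}$ in place of $\mathcal{L}$, conditional on the first subsample $I_1$. Write $\hat{\mathcal{R}}^s$ for the empirical regularized version on $I_2$, with $n=|I_2|$. The excess risk then splits as
\begin{align*}
\mathcal{R}^s(\hat f_s)-\inf_{\mathcal{F}}\mathcal{R}^s
\le \big[\mathcal{R}^s_\lambda(\hat f_s)-\mathcal{R}^s_\lambda(f^*_\lambda)\big] + a_{\mathcal{F}}(\lambda),
\end{align*}
using \cref{asm: minimizer} and the definition of $a_{\mathcal{F}}(\lambda)$, together with $\lambda\|\hat f_s\|^2\ge 0$.

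The bracketed term is handled by adding and subtracting the plug-in and empirical versions. First, since the softmax weights $p_a$ form a probability vector, for every $f$ we have
\begin{align*}
|\mathcal{R}^s(f)-\tilde{\mathcal{R}}^s(f)|\le \mathbb{E}\max_a|\hat{\mathcal{L}}(\boldsymbol{X},a)-\mathcal{L}(\boldsymbol{X},a)|.
\end{align*}
By the definition of the worst-case loss, the right-hand side is at most $\sum_b \mathbb{E}|\hat U_b-U_b|+\sum_a\mathbb{E}|\hat L_a-L_a| = O(n^{-\alpha}+n^{-\beta})$ by \cref{asm: convergence}. This is applied twice, once at $\hat f_s$ and once at $f^*_\lambda$, which yields the $n^{-\alpha}+n^{-\beta}$ terms; the constant $k$ is absorbed into $O(\cdot)$. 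Next, by the empirical minimality in \cref{asm:emp-minimizer}, $\hat{\mathcal{R}}^s(\hat f_s)\le\hat{\mathcal{R}}^s(f^*_\lambda)$. What remains are two deviations between $\tilde{\mathcal{R}}^s+\lambda\|\cdot\|^2$ and its empirical counterpart. Conditional on $I_1$, the losses $\hat{\mathcal{L}}$ are fixed functions bounded by $2M$ (\cref{asm: bounds-bounds}), and the $I_2$ data are i.i.d. This is exactly where sample splitting enters. The deviation at the fixed function $f^*_\lambda$ is $O(n^{-1/2})$ by Hoeffding's inequality or Chebyshev's inequality.

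The main obstacle is the uniform deviation at the data-dependent $\hat f_s$. First I will localize the norm: comparing with $f=0$ gives $\lambda\|\hat f_s\|^2\le \hat{\mathcal{R}}^s(0)\le 2M$. Hence $\hat f_s\in r\mathcal{B}_0$ with $r=\sqrt{2M/\lambda}$, and likewise $f^*_\lambda\in r\mathcal{B}_0$. It then suffices to bound
\begin{align*}
\mathbb{E}\sup_{f\in r\mathcal{B}_0}|(P_n-P)\ell_f|, \qquad \ell_f(\boldsymbol{x})=\sum_a p_a(\boldsymbol{x},f)\hat{\mathcal{L}}(\boldsymbol{x},a).
\end{align*}
The map $f(\boldsymbol{x})\mapsto \ell_f(\boldsymbol{x})$ is Lipschitz in $f(\boldsymbol{x})$: the softmax has Jacobian of norm at most $1$, the $\boldsymbol{w}_a$ have unit norm, and the losses are bounded by $2M$. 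Thus $\|\ell_f-\ell_{f'}\|_{L_2(Q)}\le C M\sqrt{k}\,\|f-f'\|_{L_2(Q)}$. Consequently the $L_2(Q)$ covering numbers of $\{\ell_f: f\in r\mathcal{B}_0\}$ are controlled via \cref{asm: entropy} by $\log N\lesssim (r/\epsilon)^{v}$. The class is uniformly bounded by $2M$; note that boundedness of $\ell_f$ does not even need \cref{asm: norm}, although that assumption can be used to control the envelope of $r\mathcal{B}_0$. Since $v<2$, Dudley's entropy integral converges, $\int_0^{2M}(r/\epsilon)^{v/2}d\epsilon\asymp r^{v/2}$, so symmetrization gives an expected supremum of order $n^{-1/2} r^{v/2}\lesssim n^{-1/2}\lambda^{-v/4}\le (n\lambda)^{-1/2}$ for $\lambda\le1$ and $v<2$. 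A Talagrand or McDiarmid step then upgrades this to a bound in probability, or we state it in expectation.

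Collecting the pieces gives $a_{\mathcal{F}}(\lambda)+O((n\lambda)^{-1/2}+n^{-\alpha}+n^{-\beta})$. The part I expect to require care is making the vector-valued Lipschitz/contraction step rigorous. Concretely, this means transferring coverings of the vector-valued ball to coverings of the scalar loss class, with the $L_2(Q)$ norm for vector functions defined componentwise, as implicit in \cref{asm: entropy}, and keeping track of the $k$-dependence.
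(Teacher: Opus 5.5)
Your proposal is correct and follows essentially the paper's argument. Both proofs split off $a_{\mathcal{F}}(\lambda)$ via $f^*_\lambda$, use the minimality of $\hat f_s$, and control the plug-in error with \cref{asm: convergence}. Both then localize $\hat f_s$ to a ball of radius $O(\lambda^{-1/2})$ by comparing with $f=0$, and bound the uniform deviation by transferring the entropy bound of \cref{asm: entropy} through a Lipschitz map, which gives $n^{-1/2}\lambda^{-v/4}\le (n\lambda)^{-1/2}$.

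The differences are minor. You swap in $\hat{\mathcal L}$ at the population level before passing to the empirical measure, whereas the paper swaps at the empirical level. As a result, your empirical process runs over the plug-in loss class conditional on $I_1$, and the plug-in error at the data-dependent $\hat f_s$ is controlled by a bound uniform in $f$, which is slightly cleaner than the paper's treatment of that term. Also, your localization constant should be $4M$ rather than $2M$. The worst-case loss can be negative, so you only have $\frac{1}{n}\sum_i \hat{\mathcal L}^s(\boldsymbol{X}_i,\hat f_s)\ge -2M$, not $\ge 0$; this changes constants only, not the rate.
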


If we define the approximation error as the gap between the best surrogate risk achievable over the class $\mathcal{F}$ and 
that achievable over the class of all measurable functions $\mathcal{G}$, namely 
\begin{align*}
\mathcal{A}(\mathcal{F})\coloneqq
\inf_{f\in\mathcal{F}}\mathcal{R}^s(f)- \inf_{f' \in \mathcal{G}}\mathcal{R}^s (f'),
\end{align*}
then, by combining \cref{thm:bounding-excess-loss} and \cref{thm:oracle-inequality}, we obtain the following corollary.

\begin{corollary}
\label{lem:class-approximation}
Under \cref{asm: bounds-bounds}-\ref{asm: norm}, we have
\begin{align*}
\mathcal{R}(g_{\hat{f}_s})
\leq
\mathcal{R} (g^*) + k \cdot \left(\mathcal{A}(\mathcal{F}) + a_{\mathcal{F}}(\lambda)
+ O\left((n\lambda)^{-\frac{1}{2}}+ n^{-\alpha}+n^{-\beta}\right)\right).
\end{align*}
\end{corollary}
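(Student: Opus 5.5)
The corollary follows by chaining \cref{thm:bounding-excess-loss} and \cref{thm:oracle-inequality}, with the approximation error $\mathcal{A}(\mathcal{F})$ inserted in between.

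First, apply the second part of \cref{thm:bounding-excess-loss} to the specific function $f=\hat f_s$. This is legitimate because $\hat f_s\in\mathcal{F}\subseteq\mathcal{G}$: elements of $\mathcal{F}$ are measurable maps $\mathcal{X}\to\mathbb{R}^{k-1}$. It gives
\begin{align*}
\mathcal{R}(g_{\hat f_s})-\mathcal{R}(g^*)\le k\Bigl(\mathcal{R}^s(\hat f_s)-\inf_{f'\in\mathcal{G}}\mathcal{R}^s(f')\Bigr).
\end{align*}
That theorem needs two hypotheses: almost-sure uniqueness of the minimax action, and integrability of $\max_a|\mathcal{L}|$. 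Integrability follows immediately from \cref{asm: bounds-bounds}, since $|\mathcal{L}(\boldsymbol{X},a)|\le 2M$. Uniqueness is not among \cref{asm: bounds-bounds}--\ref{asm: norm}, so I would carry it as a standing hypothesis inherited from \cref{thm:bounding-excess-loss} and state this explicitly.

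Second, telescope the surrogate excess risk:
\begin{align*}
\mathcal{R}^s(\hat f_s)-\inf_{\mathcal{G}}\mathcal{R}^s=\Bigl(\mathcal{R}^s(\hat f_s)-\inf_{\mathcal{F}}\mathcal{R}^s\Bigr)+\Bigl(\inf_{\mathcal{F}}\mathcal{R}^s-\inf_{\mathcal{G}}\mathcal{R}^s\Bigr).
\end{align*}
The second bracket is $\mathcal{A}(\mathcal{F})$ by definition. The first bracket is bounded by \cref{thm:oracle-inequality} by $a_{\mathcal{F}}(\lambda)+O((n\lambda)^{-1/2}+n^{-\alpha}+n^{-\beta})$, for $0<\lambda\le 1$. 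Both infima are finite because $|\mathcal{R}^s|\le 2M$: $\mathcal{R}^s$ is a convex combination of losses bounded by $2M$. So the subtraction is well defined.

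Third, multiply by $k$, add $\mathcal{R}(g^*)$, and absorb the constant $k$ into the $O(\cdot)$ term (or keep it outside, as in the statement).

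The only delicate point is bookkeeping rather than analysis. The bound from \cref{thm:oracle-inequality} is presumably in expectation or with high probability over the sample. The corollary should therefore be read in the same sense, and the deterministic inequality from \cref{thm:bounding-excess-loss} is applied pointwise for each realization of $\hat f_s$. Because that inequality holds for every fixed $f$, taking expectations or intersecting events preserves it, so the chain goes through without further work.
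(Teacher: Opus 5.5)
Your proposal is correct and matches the paper, which obtains the corollary simply by ``combining'' \cref{thm:bounding-excess-loss} and \cref{thm:oracle-inequality}. As you do, it applies the former to $\hat f_s\in\mathcal{F}\subseteq\mathcal{G}$ and splits the surrogate excess risk at $\inf_{\mathcal{F}}\mathcal{R}^s$, which produces $\mathcal{A}(\mathcal{F})$. Your extra remarks are accurate bookkeeping that the paper leaves implicit: the uniqueness hypothesis of \cref{thm:bounding-excess-loss} and the restriction $0<\lambda\le 1$ must be carried along, integrability follows from \cref{asm: bounds-bounds}, and the bound is to be read in expectation over the sample, as in the proof of \cref{thm:oracle-inequality}.
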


\subsection{Verification of Assumptions}
\label{sec:verification}

To ensure the applicability of our theoretical results, we verify that Assumptions \ref{asm: bounds-bounds}--\ref{asm: norm} are satisfied under standard settings. Specifically, we consider the case where the hypothesis space is generated by the Gaussian radial basis function (RBF) kernel, a common choice in practice due to its universal approximation properties.

\begin{remark}
Suppose the outcome variable is bounded in $[K_0,K_1]$. Then a valid bounding constant $M$ is given by $M = \max(\lvert K_0 \rvert, \lvert K_1 \rvert)$. The same holds for the empirical plug-in bounds, and thus \cref{asm: bounds-bounds} is satisfied.

For the discrete problems handled by \verb|causaloptim|, the bounds are given by minima or maxima of linear expressions in observable probabilities. Therefore, as shown in \cite{puzhang2021} (Supplement D.2), convergence rates of the estimated probabilities transfer directly to the plug-in bounds, implying that \cref{asm: convergence} is satisfied.
\end{remark}

\begin{restatable}{ex}{gaussverification}
\label{ex:gaussian-verification}
Let $\mathcal{X} \subset \mathbb{R}^d$ be compact. Consider the Gaussian RBF kernel $k_{\text{RBF}}(\boldsymbol{X}, \boldsymbol{X}') = \exp(-\|\boldsymbol{X} - \boldsymbol{X}'\|^2 / 2\sigma^2)$. Let $\mathcal{H}$ denote the associated RKHS. We define the function space as the product Hilbert space $\mathcal{F} = \mathcal{H}^{k-1}$, equipped with the inner product $\langle f, g \rangle_{\mathcal{F}} = \sum_{j=1}^{k-1} \langle f^j, g^j \rangle_{\mathcal{H}}$.
Then assumptions \ref{asm: entropy}, \ref{asm: minimizer}, \ref{asm:emp-minimizer} and \ref{asm: norm} are satisfied.
\end{restatable}
See \cref{apx:proofs} for a formal argument. 

In settings with mixed data, categorical covariates are mapped to vectors via one-hot encoding and concatenated with the standardized continuous covariates. The resulting space still satisfies the assumptions.

\section{Numerical Experiments}
\label{sec:experiments}
We evaluate the finite sample performance of the proposed framework through four simulation studies. Each study varies a single parameter of the data generating process while holding the others fixed, thereby isolating the effect of (1) sample size, (2) unmeasured confounding strength, (3) instrumental variable strength, and (4) the number of treatment options. The code for the simulation as well as an implementation of the algorithm, can be found on \url{https://github.com/jhruza/Multicategory_Partial_Identification_Learning}.

\paragraph{Data generating process.}
We use an exact data generating process (ExactDGP), described in detail in the Supplementary Material (\cref{apx:simulation-setup}). The setup mimics an observational study with $d = 5$ continuous covariates $\boldsymbol{X} \sim \text{Uniform}(-1,1)^5$, $k = 3$ categorical treatment options, a instrumental variable $Z$ with $|Z| = 3$ levels, and a binary outcome $Y$. 

A discrete latent confounder $U \in \{-2, -1, 0, 1, 2\}$ with a bell-shaped distribution induces selection bias by influencing both the treatment assignment and the outcome. The treatment assignment follows a multinomial logistic model $P(A \mid \boldsymbol{X}, Z, U) \propto \exp(\boldsymbol{X}^\top \boldsymbol{\beta}_A + \boldsymbol{Z}^\top \boldsymbol{\gamma} \cdot \lambda_{\text{iv}} + \boldsymbol{\delta}_A \cdot U)$, where $\lambda_{\text{iv}}$ scales the instrument strength and $\boldsymbol{\delta}_A =(\delta_1, \dots, \delta_k)$ where $\delta_a = \lambda_{\text{conf}} \cdot (2(a-1) - k + 1)/(k-1)$ scales the strength of the confounder $U \to A$. Each treatment is assigned a treatment-specific factor evenly spaced between $-1$ and $1$, scaled by $\lambda_{\text{conf}}$, so that the latent confounder $U$ affects treatment assignment differently across treatment options. The outcome probability is $P(Y=1 \mid \boldsymbol{X}, A=a, U) = \sigma(\boldsymbol{X}^\top \boldsymbol{\beta}_{Y,a} + \lambda_{\text{conf}} \cdot U + \tau_a)$, where $\lambda_{\text{conf}}$ controls the confounding strength. Since $U$ is discrete with known support and distribution, we can analytically compute the potential outcomes (ground truth) $\mu_a(\boldsymbol{X}) = \sum_u P(Y=1 \mid \boldsymbol{X}, A=a, U=u) \cdot P(U=u)$ by marginalizing over the confounder. We use this to derive the oracle benchmark policy.

\paragraph{Use of true probabilities.}
A key design choice in our simulations is that we supply the \emph{true} conditional joint probabilities $P(A=a, Y=y \mid Z=z, \boldsymbol{X})$ to \verb|causaloptim| when computing the partial identification bounds, rather than estimating them from data. We do the same for the MOML method, by calculating the propensity scores $P(A=a \mid \boldsymbol{X}=\boldsymbol{x})$ using the ground truth. This removes the estimation step (as well as the sample splitting) and we can asses the performance of the policy learning algorithms. To offset this advantage, for these methods, we use a large training sample size of $n = 12{,}000$ (except in the sample size experiment), providing the outcome regression baselines with sufficient data to accurately learn models for $\mathbb{E}[Y \mid \boldsymbol{X}, A]$.

\paragraph{Comparators.}
We compare the following methods:
\begin{enumerate}
    \item \textbf{KerriskMin} (proposed): The kernelized risk minimizer from \cref{sec:kernelized-formulation}, using a Gaussian RBF kernel with bandwidth selected by the median heuristic and regularization $\lambda = 10^{-4}$.
    \item \textbf{NeuralriskMin} (proposed): The neural network parameterization from \cref{sec:neural-network}, using two hidden layers in a MLP with 64 nodes each and the same surrogate loss.
    \item \textbf{MOML}: The Multicategory Outcome-weighted Margin-based Learning method of Zhang et al.\ \cite{Zhang2020}, a point-identified ITR estimator that assumes no unmeasured confounding. We supply the propensity scores calculated from the data generating mechanism.
    \item \textbf{Tree OutcomeReg}: Outcome regression via a \texttt{GradientBoostingClassifier} from \texttt{scikit-learn} (100 trees, max depth 3). The model estimates $\hat{\mu}_a(\boldsymbol{X})$ for each treatment, and the ITR selects $\arg\max_a \hat{\mu}_a(\boldsymbol{X})$.
    \item \textbf{NeuralOutReg}: A neural network outcome regressor (two hidden layer MLP, 64 nodes each) that directly estimates $\hat{\mu}_a(\boldsymbol{X})$ and selects the treatment with the highest predicted outcome.
    \item \textbf{Minimax rule}: The minimax policy from \cref{prop:minimax-decision}, which selects $g^*(\boldsymbol{X}) = \arg\min_a \mathcal{L}(\boldsymbol{X}, a)$ using the true identification bounds. This is the best achievable policy under partial identification without any learning or smoothing.
\end{enumerate}

\paragraph{Evaluation metrics.}
Since we have access to the ground truth $\mu_a(\boldsymbol{X})$, we evaluate each method on an independent test set of $n_{\text{test}} = 5{,}000$ samples using three metrics: \textit{Policy agreement}, the proportion of test subjects that agree with the oracle optimal treatment $\arg\max_a \mu_a(\boldsymbol{X})$: \textit{oracle risk} (\cref{eq:oracle-risk}), $\mathcal{R}^*(g) = \mathbb{E}[\max_a \mu_a(\boldsymbol{X}) - \mu_{g(\boldsymbol{X})}(\boldsymbol{X})]$; and \textit{expected outcome}, $\mathbb{E}[\mu_{g(\boldsymbol{X})}(\boldsymbol{X})]$, the clinical value of the policy. Each simulation is repeated across at least $10$ Monte Carlo runs (see \cref{tab:dgp-parameters} in the supplementary material for detailed information), and we report means with bootstrapped confidence intervals.

\paragraph{Training and hyperparameter selection.}
We note that no systematic hyperparameter tuning was performed for any method. For the proposed kernelized method, we select the kernel bandwidth $\sigma$ via the median heuristic \cite{garreau2017large} (setting $\sigma$ to the median of pairwise Euclidean distances in the training data). The neural network variants (NeuralRisktMin and NeuralOutReg) use default learning rates. For the Tree OutcomeReg baseline, we use the default configuration with $100$ trees and maximum depth $3$. The MOML baseline uses the same kernel and regularization settings as KerRiskMin. The consistent performance of the proposed methods across the different settings suggests that the results are not simply due to fine tuned models.

\paragraph{Default parameters.}
Unless otherwise stated, the default configuration is $n = 12{,}000$ training samples, confounding strength $\lambda_{\text{conf}} = 4$, IV strength $\lambda_{\text{iv}} = 1$, $k = 3$ treatments, and $|Z| = 3$ instrument levels.

\subsection{Finite Sample Convergence}
\label{sec:sim-sample-size}

We first examine convergence behavior in a favorable setting of no unmeasured confounding ($\lambda_{\text{conf}} = 0$) as a sanity check. We vary the training sample size $n$ from $5$ to $20{,}000$ while holding all other parameters at their defaults.

\Cref{fig:sample_size} presents the results. In the setting with no unmeasured confounding, the flexible outcome regression baseline NeuralOutReg achieves the highest policy agreement and lowest oracle risk for large $n$. This is expected: without confounding, $\mu_a(\boldsymbol{X}) = \mathbb{E}[Y \mid \boldsymbol{X}, A=a]$ is directly identifiable from the observed data, and outcome regression models this quantity directly. The proposed partial identification methods (KerRiskMin and NeuralRiskMin) converge to the performance of the minimax rule, which they target. The minimax policy is suboptimal relative to the oracle when confounding is absent, as minimax policy is ``unaware'' that there is no unmeasured confounding and accounts for it nevertheless.

\begin{figure}[ht]
    \centering
    \includegraphics[width=\linewidth]{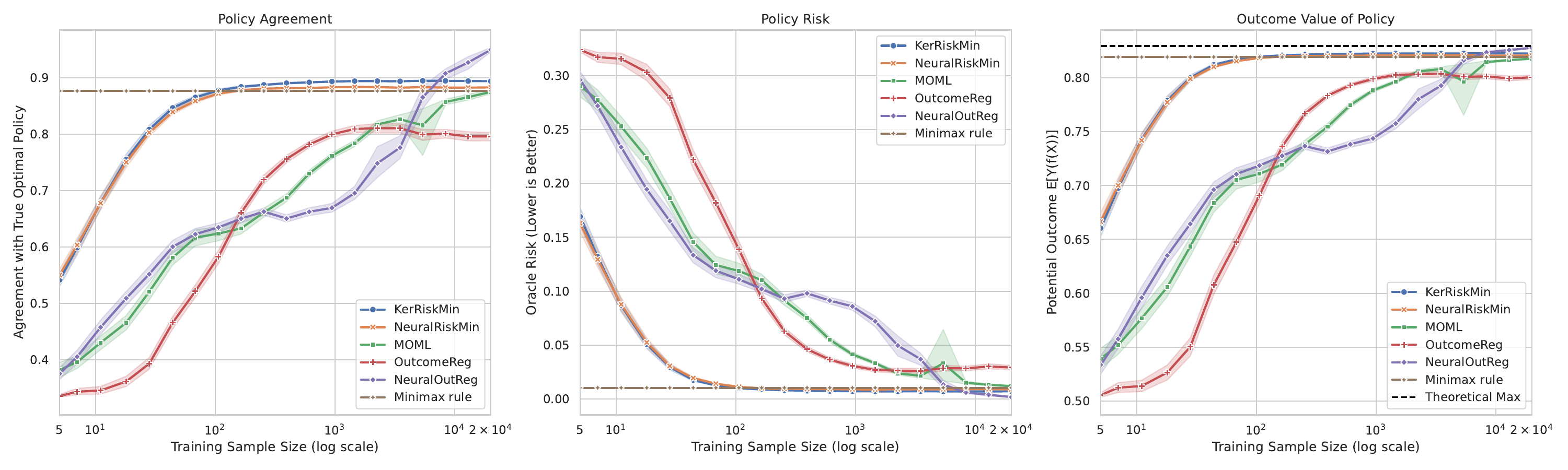}
    \caption{Policy agreement, oracle risk, and expected outcome as a function of training sample size $n$, with $\lambda_{\text{conf}} = 0$ (no confounding). All methods converge as $n$ increases. In the absence of confounding, the outcome regression method NeuralOutReg achieves the highest agreement and lowest oracle risk, consistent with the fact that they directly estimate the identifiable $\mu_a(\boldsymbol{X})$. The proposed methods (KerRiskMin and NeuralRiskMin) converge to the minimax rule and achieve good performance even though they are optimizing a conservative objective.}
    \label{fig:sample_size}
\end{figure}

This experiment establishes a baseline: in the best case for the comparators, outcome regression is the strongest approach.

\subsection{Robustness to Unmeasured Confounding}
\label{sec:sim-confounding}

Based on the good performance of the outcome regression models of previous simulation we now fix $n = 12{,}000$ and increase the confounding strength $\lambda_{\text{conf}}$ from 0 to 10.

\Cref{fig:confounding} shows a shift in performance. At $\lambda_{\text{conf}} = 0$ (also implying $\boldsymbol{\delta}_A=\boldsymbol{0}$), the outcome regression methods (Tree OutcomeReg, NeuralOutReg) and the MOML baseline achieve approximately $78 - 93\%$ agreement with the true optimal policy. However, as confounding increases, their performance degrades substantially: by $\lambda_{\text{conf}} = 10$, Tree OutcomeReg drops to approximately $57\%$ agreement and NeuralOutReg as well as MOML to around $70\%$. These methods rely on the assumption no unmeasured confounding assumption which is gradually violated as $\lambda_{\text{conf}}$ increases.

\begin{figure}[ht]
    \centering
    \includegraphics[width=\linewidth]{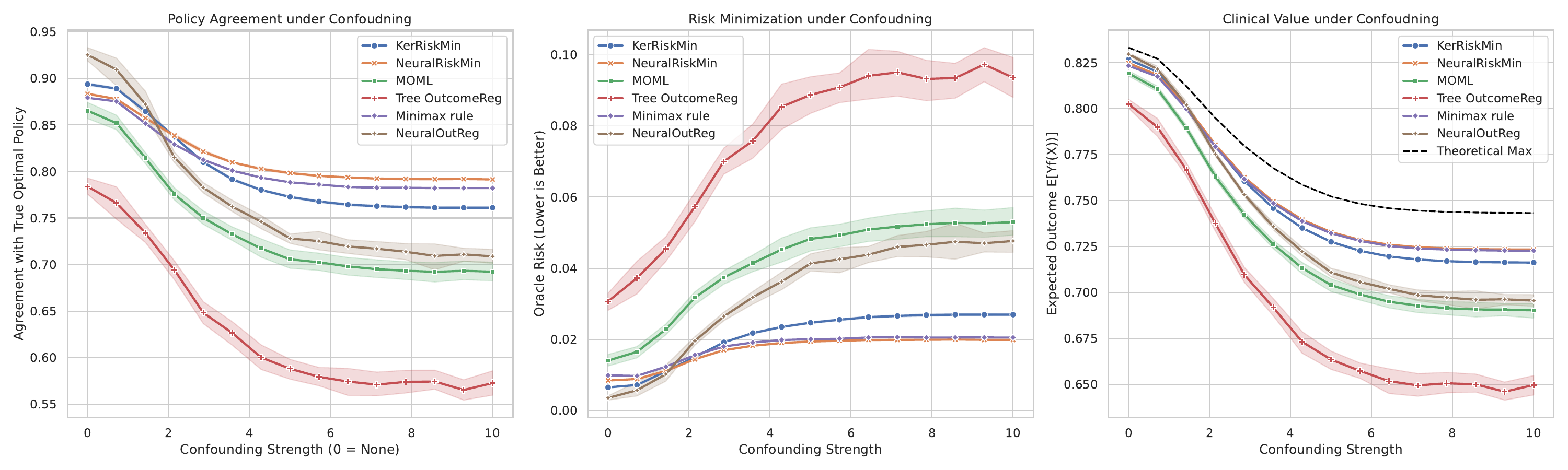}
    \caption{Performance as a function of confounding strength $\lambda_{\text{conf}}$, with $n = 12{,}000$. Outcome regression methods degrade rapidly as confounding increases, while the proposed partial identification methods (KerRiskMin and NeuralRiskMin) remain stable and close to the minimax rule. Note that even the oracle minimax rule shows a gradual decline, reflecting the widening of the identification bounds under stronger confounding.}
    \label{fig:confounding}
\end{figure}

In contrast, the proposed methods (KerRiskMin and NeuralRiskMin) degrade much slower and for $\lambda_{\text{conf}}>2$ are superior than the other methods. The partial identification bounds, derived from the instrumental variable $Z$, remain valid regardless of confounding strength. The gradual decline in absolute performance reflects the widening of the bounds as the confounding increases.  Stronger confounding increases identification uncertainty, but the bounds never become invalid.

MOML uses a causal framework (inverse probability weighting), but it also assumes all confounders are measured and thus also degrades. Beyond $\lambda_{\text{conf}} \approx 6$, all methods plateau. This happens since the confounding term $\lambda_{\text{conf}} \cdot U$ begins to dominate the logistic link function, so that the outcome is increasingly determined by $U$ alone.

\subsection{Impact of Instrument Strength}
\label{sec:sim-iv-strength}

\begin{figure}[ht]
    \centering
    \includegraphics[width=\linewidth]{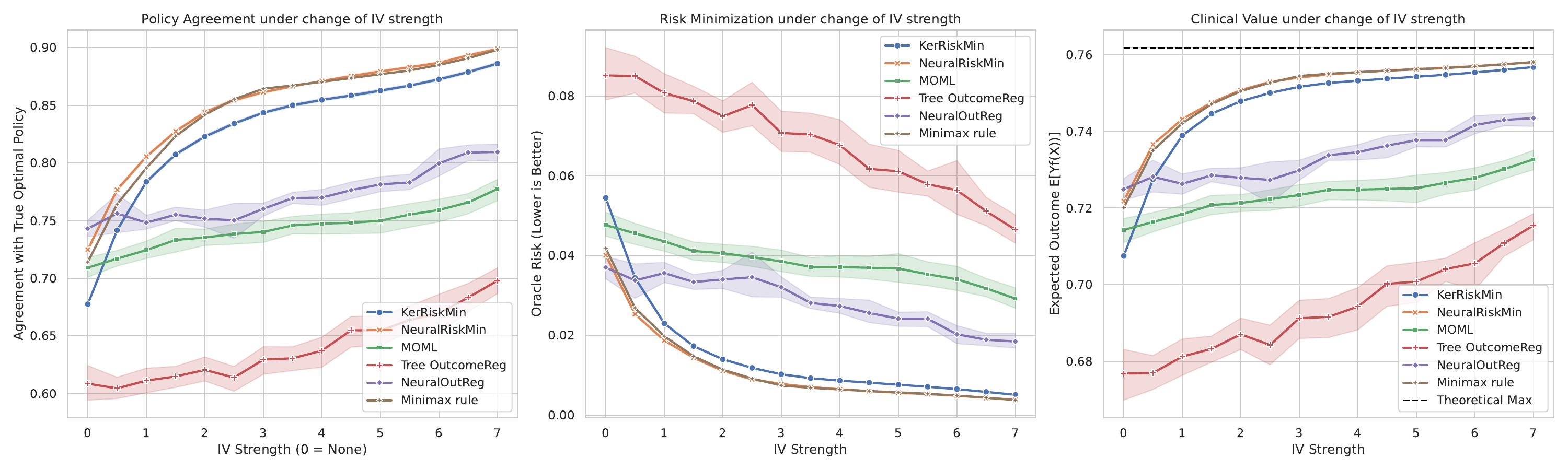}
    \caption{Performance as a function of instrumental variable strength $\lambda_{\text{iv}}$, with $n = 12{,}000$ and $\lambda_{\text{conf}} = 4$. The partial identification methods improve with instrument strength as the bounds narrow. The outcome regression baselines are less sensitive to the instrument strength, since they do not use $Z$ in the models.}
    \label{fig:iv_strength}
\end{figure}

This simulation investigates how the information content of the instrument affects the learned policy. We fix the confounding at $\lambda_{\text{conf}} = 4$ and vary $\lambda_{\text{iv}}$ from 0 (instrument provides no information about treatment assignment) to 7 (strong instrument). In the partial identification framework, instrument strength is directly related to bound width. A weak instrument yields wide, uninformative intervals $[L_a(\boldsymbol{X}), U_a(\boldsymbol{X})]$, while a strong instrument tightens these.

\Cref{fig:iv_strength} shows a clear monotonic relationship between instrument strength and the performance of the partial identification methods. When $\lambda_{\text{iv}} = 0$, the instrument provides no leverage and the bounds are wide; our methods perform not better than the uninformed baseline. As $\lambda_{\text{iv}}$ increases, the bounds tighten progressively. For strong instruments ($\lambda_{\text{iv}} \geq 5$), KerRiskMin and NeuralRiskMin converge to the oracle policy.

The outcome regression baselines (Tree OutcomeReg, NeuralOutReg) and MOML, maybe at first surprising, increase a small amount in performance. Although the outcome regression does not explicitly adjust for the instrument, a stronger instrument decreases the share of treatment variation driven by unobserved factors, thereby reducing the relative influence of the unobserved confounder on treatment assignment. As a result, the observed association $\mathbb{E}[Y \mid \boldsymbol{X},A=a]$ gets closer to the causal quantity $\mu_a(\boldsymbol{X})$ leading to a modest reduction in bias even without directly making use of the instrument variable.
Nevertheless their performance remains bounded at a level the confounding allows, regardless of the information available in the data.

This is in line with \cref{prop:risk-error}: the gap between the oracle risk and the worst-case risk is controlled by the width of the identification bounds, and tighter bounds translate directly into better policies.

\subsection{Scalability to Multiple Treatments}
\label{sec:sim-treatments}

Lastly, we examine the scalability of the simplex embedding as the number of treatments $k$ increases from 2 to 6, while the instrument cardinality is held fixed at $|Z| = 3$. 

\Cref{fig:treatment_change} shows that our proposed methods maintain strong performance for $k \leq 3$. However, a sharp decline occurs when $k$ exceeds $|Z| = 3$. For $k = 4$, agreement drops drastically for the proposed methods, and by $k = 6$ the performance has degradated completely. This degradation is not a limitation of the learning algorithm but rather reflects underlying mechanism of partial identification. The instrumental variable $Z$ with $|Z|$ levels can induce at most $|Z|$ distinct conditional treatment distributions $P(A \mid Z = z, \boldsymbol{X})$. When the number of treatments exceeds the number of instrument levels, there are insufficient constraints to tightly bound all $k$ potential outcomes simultaneously, and the identification intervals widen substantially. As wider bounds increase the worst-case risk for every action, this makes the minimax policy less likely to agree with the oracle policy, as it is conservative.

\begin{figure}[ht]
    \centering
    \includegraphics[width=\linewidth]{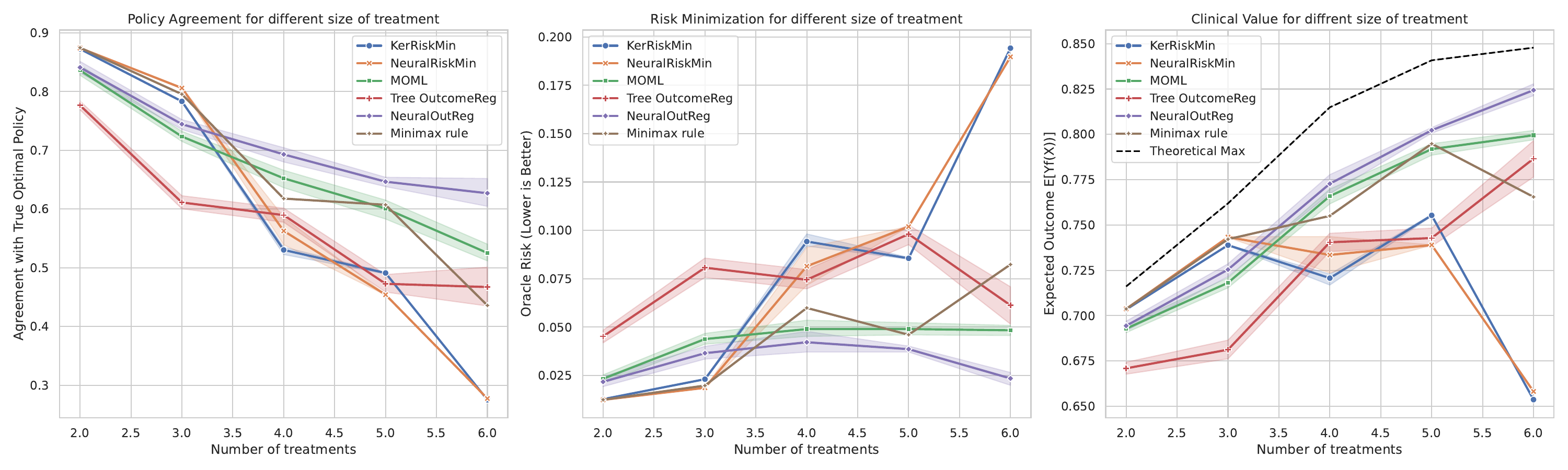}
    \caption{Performance as the number of treatments $k$ increases from 2 to 6, with $|Z| = 3$ instrument levels held fixed. All methods degrade as $k$ grows, but a large drop occurs when $k$ exceeds $|Z| = 3$. This reflects a structural constraint: the instrument can only induce $|Z|$ distinct treatment distributions, and when $k > |Z|$, the identification bounds widen substantially.}
    \label{fig:treatment_change}
\end{figure}

It is worth noting that the outcome regression baselines also degrade as $k$ increases, even though they do not use the instrument. This suggests increasing difficulty of the decision problem itself, as with more treatments, the probability that the different treatment can be close in outcome increases. All methods require more information to make the correct decisions. However, the partial identification methods exhibit a sharper decline after $k>3$.
This suggests that when applying this method in practice, the instrument should have sufficient number of levels relative to the number of treatments.

\section{Discussion}
\label{sec:discussion}
We introduced a partial identification learning framework for individualized treatment rules with categorical treatments, outcomes, and instrumental variables. The main contribution of the paper is to extend partial identification based policy learning beyond the binary setting. This is relevant in applications where treatment choice is multicategorical and where point identification of causal effects is difficult to justify, such as observational studies with unmeasured confounding or randomized trials with noncompliance. 

Our experiments highlight the tradeoffs of the proposed framework. In settings without unmeasured confounding, point identified methods such as outcome-regression-based models can outperform the proposed approach, since they target the fully identifiable oracle rule directly. In contrast, our method is designed for robustness when such assumptions are not credible. This robustness, however, depends strongly on the informativeness of the partial identification bounds. When the bounds are narrow, the minimax policy is close to the oracle policy and the learned rule performs well. When the bounds are wide, the resulting rule becomes more conservative. The theory and simulations show that stronger instruments improve performance because they narrow the bounds, while weak instruments lead to broader identification regions and less informative decisions. They also show a structural limitation of the partial identification problem. When the instrument has fewer effective levels than the treatment, the available variation may be insufficient to constrain the potential outcomes, causing the bounds-based methods to deteriorate rapidly.

While we focus on the minimax criterion, the same general learning idea can be adapted to other deterministic decision criteria under partial identification. In particular, one may consider criteria obtained by extending the binary rules discussed by \textcite{Cui2020} to the multicategory setting. Whenever such a criterion can be represented through a pointwise criterion specific loss derived from the bounds, the corresponding decision rule can be written as choosing the action that minimizes this loss. In such cases, the current worst-case loss can in principle be replaced by another criterion specific loss. For example, a multicategory extension of the pessimistic criterion would correspond to choosing the treatment with the largest lower bound. By contrast, extensions based on randomized decision rules would require additional methodological development, since the target object is no longer a deterministic policy.

From a computational perspective, the proposed surrogate risk makes gradient based optimization possible, but the surrogate risk is generally not convex. Consequently,  performance may depend on initialization, optimization strategy, and regularization choices. In the present paper, our goal was to establish the framework rather than to optimize all implementation details. In practice, however, parameters such as the regularization factor $\lambda$, the kernel bandwidth $\sigma$ and other model parameters may be selected in a data-adaptive way using a validation set or cross validation \parencite{Arlot2010}. 

Several limitations should be noted. First, the framework inherits the assumptions required by the method used to derive the identification bounds, so validity of the learned policy depends on validity of the underlying causal model. Second, even valid bounds may be too wide to support useful treatment decisions for example if an instrument is weak or otherwise uninformative. Finally, the present paper focuses on categorical treatments, outcomes, and instruments, and therefore does not yet address settings with continuous decision or outcome spaces. One could also consider settings in which treatment options are not fully distinct, but instead have an inherent structure such as an ordering. In these cases, the simplex embedding may be unnecessary. This is relevant, for example, when treatments correspond to different doses of the same drug rather than to completely different drugs.

Natural extensions include combining the framework with other sources of partial identification, such as proxy-variable approaches, and extending it to ordinal or continuous treatments and outcomes as well as dynamic treatment regimes. 
\printbibliography 
\begin{refsection}
\appendix

\section{Additional Results} \label{apx:additional-results}

\begin{lemma}
\label{lemma: probability_bound}
Let the decision rule be defined as $g_f(\boldsymbol{X}) := \arg\max_{a \in \mathcal{A}} \langle \boldsymbol{w}_a, f(\boldsymbol{X}) \rangle$. For any measurable function $f$ and covariates $\boldsymbol{X}$, the softmax-weight $p_{g_f(\boldsymbol{X})}(\boldsymbol{X}, f)$ satisfies the lower bound:
\begin{align*}    
    p_{g_f(\boldsymbol{X})}(\boldsymbol{X}, f) \geq \frac{1}{k}.
\end{align*}
\end{lemma}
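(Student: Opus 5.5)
The plan is to use only the monotonicity of the exponential together with the definition of $g_f$ as a maximizer of the inner products. Fix $\boldsymbol{X}$ and write $s_a := \langle \boldsymbol{w}_a, f(\boldsymbol{X})\rangle$ for $a \in \mathcal{A}$. Set $a^\ast := g_f(\boldsymbol{X})$. Because $\mathcal{A}$ is finite, the argmax exists; under any fixed tie-breaking rule, $a^\ast$ is one of the maximizers, so $s_{a^\ast} \geq s_b$ for every $b \in \mathcal{A}$. No property of $f$ beyond measurability is needed, since the argument is pointwise in $\boldsymbol{X}$.

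Since $t \mapsto \exp(t)$ is increasing, $\exp(s_b) \leq \exp(s_{a^\ast})$ for each of the $k$ actions. Summing over $b$ bounds the softmax denominator by $k \exp(s_{a^\ast})$. Substituting into \eqref{eq:softmax-weights} gives
\begin{align*}
p_{a^\ast}(\boldsymbol{X}, f) = \frac{\exp(s_{a^\ast})}{\sum_{b=1}^k \exp(s_b)} \geq \frac{\exp(s_{a^\ast})}{k\exp(s_{a^\ast})} = \frac{1}{k}.
\end{align*}
The division is legitimate because all exponentials are strictly positive.

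There is essentially no obstacle. The only point to state explicitly is that the bound holds regardless of how ties in the argmax are broken, since any maximizer attains the largest score. One can also remark that equality holds exactly when all $s_a$ coincide, for example when $f(\boldsymbol{X}) = 0$, using $\sum_a \boldsymbol{w}_a = 0$. This remark is not needed for the inequality.
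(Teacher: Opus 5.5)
Your proof is correct and is essentially the paper's argument: the paper also uses that $g_f(\boldsymbol{X})$ maximizes the inner product and bounds every exponential in the softmax denominator by $\exp(\langle \boldsymbol{w}_{a^*}, f(\boldsymbol{X})\rangle)$ via monotonicity, giving $p_{a^*}(\boldsymbol{X},f) \geq 1/k$. Your remarks on tie-breaking and on the equality case (all scores equal, which forces $f(\boldsymbol{X})=0$) are accurate but, as you say, not needed for the inequality.
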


\begin{lemma}\label{lemma: bound-risk-diff}
   Under Assumption \ref{asm: convergence}, for any fixed $f \in \mathcal{F}$ we have
    \begin{align*}
        \mathbb{E}[  \vert \mathcal{L}^s(\boldsymbol{X}, f) - \hat{\mathcal{L}}^s(\boldsymbol{X}, f)\vert ] \leq O(n^{-\alpha}+n^{-\beta}).
    \end{align*}
\end{lemma}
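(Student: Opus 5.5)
The plan is to reduce the surrogate-loss gap to a pointwise gap between the true and estimated worst-case losses, and then use \cref{asm: convergence}. Here $\hat{\mathcal{L}}^s(\boldsymbol{X}, f) = \sum_{a} p_a(\boldsymbol{X}, f)\hat{\mathcal{L}}(\boldsymbol{X}, a)$. The softmax weights $p_a(\boldsymbol{X}, f)$ depend only on $f$, not on the bound estimates, so the weights are shared by both losses. Hence
\begin{align*}
\vert \mathcal{L}^s(\boldsymbol{X}, f) - \hat{\mathcal{L}}^s(\boldsymbol{X}, f)\vert
= \Big\vert \sum_{a=1}^k p_a(\boldsymbol{X}, f)\big(\mathcal{L}(\boldsymbol{X}, a) - \hat{\mathcal{L}}(\boldsymbol{X}, a)\big)\Big\vert
\leq \max_{a \in \mathcal{A}} \vert \mathcal{L}(\boldsymbol{X}, a) - \hat{\mathcal{L}}(\boldsymbol{X}, a)\vert .
\end{align*}
This uses that the $p_a$ are nonnegative and sum to one. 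The maximum can further be bounded by $\sum_{a} \vert \mathcal{L}(\boldsymbol{X}, a) - \hat{\mathcal{L}}(\boldsymbol{X}, a)\vert$, which avoids taking the expectation of a maximum.

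Next I would control each term $\vert \mathcal{L}(\boldsymbol{X}, a) - \hat{\mathcal{L}}(\boldsymbol{X}, a)\vert$. By definition it equals $\vert \max_{b\neq a} U_b - L_a - \max_{b\neq a}\hat U_b + \hat L_a\vert$. The triangle inequality, together with the elementary fact $\vert \max_b u_b - \max_b v_b\vert \leq \max_b \vert u_b - v_b\vert$, gives
\begin{align*}
\vert \mathcal{L}(\boldsymbol{X}, a) - \hat{\mathcal{L}}(\boldsymbol{X}, a)\vert \leq \sum_{b \in \mathcal{A}} \vert U_b(\boldsymbol{X}) - \hat U_b(\boldsymbol{X})\vert + \vert L_a(\boldsymbol{X}) - \hat L_a(\boldsymbol{X})\vert .
\end{align*}

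Finally, I take expectations and sum over $a$. Since $k$ is finite and fixed, this gives
\begin{align*}
\mathbb{E}\vert \mathcal{L}^s - \hat{\mathcal{L}}^s\vert \leq k\sum_b \mathbb{E}\vert U_b - \hat U_b\vert + \sum_a \mathbb{E}\vert L_a - \hat L_a\vert = O(n^{-\beta}) + O(n^{-\alpha}),
\end{align*}
with constants depending on $k$.

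The step needing the most care is the expectation. The estimates $\hat L_a, \hat U_a$ are random, fitted on $I_1$. I read \cref{asm: convergence} as a statement about the expectation over both $\boldsymbol{X}$ and the estimation sample, or conditional on $I_1$ with $\boldsymbol{X}$ an independent draw; the sample splitting makes either reading consistent. Since $f$ is fixed and the bound is pointwise, no uniformity over $f$ is needed, and integrability follows from \cref{asm: bounds-bounds} if required. Otherwise the argument is routine: the inequality for the difference of maxima is the only non-trivial ingredient.
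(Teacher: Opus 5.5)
Your proposal is correct and follows essentially the same route as the paper. Both proofs bound the softmax-weighted gap by $\sum_a |\mathcal{L}(\boldsymbol{X},a)-\hat{\mathcal{L}}(\boldsymbol{X},a)|$, split off $|\hat L_a(\boldsymbol{X})-L_a(\boldsymbol{X})|$, and control the difference of maxima; the paper's case analysis on the two argmax indices is just a proof of $|\max_b u_b-\max_b v_b|\le\max_b|u_b-v_b|$. Your further bound by $\sum_b|U_b(\boldsymbol{X})-\hat U_b(\boldsymbol{X})|$ is a small improvement: it makes explicit that the $\boldsymbol{X}$-dependent argmax index does not obstruct applying \cref{asm: convergence} in expectation, a point the paper leaves implicit.
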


\begin{lemma}
    \label{lemma: lipschitz-covering}
Let $\Phi\colon \mathcal{B}^* \to \mathcal{F}_{\mathcal{L}}$ defined by $\Phi(f) := \mathcal{L}^s(\cdot, f)$ be surjective and Lipschitz with constant $L_{lip}$ then,
\begin{align*}
N(\epsilon, \mathcal{F}_{\mathcal{L}}, \Vert \cdot \Vert_{\mathcal{F}_{\mathcal{L}}}) \leq N\left(\frac{\epsilon}{L_{lip}} , \mathcal{B}^*, \Vert \cdot \Vert_{\mathcal{B}^*}\right).
\end{align*}
\end{lemma}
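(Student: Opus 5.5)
The statement to prove is \cref{lemma: lipschitz-covering}. The plan is to push a cover of the domain $\mathcal{B}^*$ forward through $\Phi$. The standard fact is that the image of an $\epsilon/L_{lip}$-net under an $L_{lip}$-Lipschitz map is an $\epsilon$-net of the image. Surjectivity of $\Phi$ ensures that this image is all of $\mathcal{F}_{\mathcal{L}}$.

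\textbf{Step 1 (choose a cover).} Set $\delta := \epsilon/L_{lip}$ and $m := N(\delta, \mathcal{B}^*, \Vert\cdot\Vert_{\mathcal{B}^*})$. If $m=\infty$ there is nothing to prove, so assume $m<\infty$. Fix centers $f_1,\dots,f_m$ such that the closed balls of radius $\delta$ around them cover $\mathcal{B}^*$.

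\textbf{Step 2 (push forward).} Define $h_i := \Phi(f_i) = \mathcal{L}^s(\cdot, f_i)$ for $i=1,\dots,m$; each $h_i$ lies in $\mathcal{F}_{\mathcal{L}}$. Take an arbitrary $h \in \mathcal{F}_{\mathcal{L}}$. By surjectivity there is some $f \in \mathcal{B}^*$ with $\Phi(f)=h$. Choose $i$ with $\Vert f-f_i\Vert_{\mathcal{B}^*}\le \delta$. The Lipschitz property then gives
\begin{align*}
\Vert h - h_i\Vert_{\mathcal{F}_{\mathcal{L}}} = \Vert \Phi(f)-\Phi(f_i)\Vert_{\mathcal{F}_{\mathcal{L}}} \le L_{lip}\Vert f-f_i\Vert_{\mathcal{B}^*} \le L_{lip}\,\delta = \epsilon .
\end{align*}
So the closed $\epsilon$-balls around $h_1,\dots,h_m$ cover $\mathcal{F}_{\mathcal{L}}$. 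By minimality of the covering number, $N(\epsilon,\mathcal{F}_{\mathcal{L}},\Vert\cdot\Vert_{\mathcal{F}_{\mathcal{L}}}) \le m$, which is the claim.

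\textbf{Main obstacle.} There is no real difficulty here; the one point to check is a convention. The covering number for $\mathcal{F}_{\mathcal{L}}$ should allow centers in $\mathcal{F}_{\mathcal{L}}$ itself, and here they automatically are, since $h_i = \Phi(f_i)$ with $f_i \in \mathcal{B}^*$. If the domain's covering convention instead allows centers outside $\mathcal{B}^*$, then $\Phi(f_i)$ may not be defined. In that case I would first replace each center by a point of $\mathcal{B}^*$ inside its ball, which at most doubles the radius. To avoid this, I would state the lemma with internal covers, matching the closed-ball definition in \cref{asm: entropy}.
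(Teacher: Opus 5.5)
Your proof is correct and follows the paper's argument: take an $\epsilon/L_{lip}$-cover of $\mathcal{B}^*$, map its centers through $\Phi$, and use surjectivity together with the Lipschitz bound to show that the images form an $\epsilon$-cover of $\mathcal{F}_{\mathcal{L}}$. Your remarks on the infinite case and on requiring the centers to lie in $\mathcal{B}^*$ are refinements of the same route; the paper assumes the latter implicitly when it evaluates $\Phi(f_i)$.
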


\begin{lemma}
\label{lemma: bound-empirical}
Under the assumptions of \cref{thm:oracle-inequality} we have 
\begin{align*}
    \mathbb{E}\left[ \mathbb{E}[\mathcal{L}^s(\boldsymbol{X},\hat{f}_s)] - \frac{1}{\vert I_2\vert}\sum_{i\in I_2}\mathcal{L}^s(\boldsymbol{X}_i, \hat{f}_s)\right] \leq O\left(\frac{1}{\sqrt{n \lambda}}\right).
\end{align*}
\end{lemma}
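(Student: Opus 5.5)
The plan is to prove the bound by a standard empirical-process argument: first control the uniform deviation over a norm ball that contains $\hat f_s$ by symmetrization and Dudley's entropy integral, then use the regularization level $\lambda$ to fix the radius of that ball.

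\textbf{Locating $\hat f_s$ in a ball.} Comparing the objective at $\hat f_s$ with its value at $f=0$ gives
\begin{align*}
\lambda\|\hat f_s\|_{\mathcal{F}}^2 \leq \hat{\mathcal{R}}^s(0) - \tfrac{1}{|I_2|}\textstyle\sum_{i\in I_2} \mathcal{L}^s(\boldsymbol{X}_i,\hat f_s).
\end{align*}
Both terms on the right are convex combinations of the $\hat{\mathcal{L}}(\boldsymbol{X}_i,a)$, and by \cref{asm: bounds-bounds} each of these is bounded by $2M$ in absolute value. Hence $\|\hat f_s\|_{\mathcal{F}}\le r_\lambda:=2\sqrt{M/\lambda}$, so $\hat f_s$ lies in the ball $\mathcal{B}^*=\{f:\|f\|_{\mathcal{F}}\le r_\lambda\}$. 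The expected generalization gap of $\hat f_s$ is therefore at most
\begin{align*}
\mathbb{E}\sup_{f\in\mathcal{B}^*}\bigl(\mathbb{E}\mathcal{L}^s(\boldsymbol{X},f)-P_{I_2}\mathcal{L}^s(\cdot,f)\bigr).
\end{align*}
By symmetrization, this is at most twice the Rademacher complexity of $\mathcal{F}_{\mathcal{L}}=\Phi(\mathcal{B}^*)$.

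\textbf{Lipschitz property of $\Phi$.} Next I would verify that $\Phi$ is Lipschitz from $L_2(Q)$ on $\mathbb{R}^{k-1}$-valued functions to $L_2(Q)$.
\begin{itemize}
\item The map $z\mapsto \mathrm{softmax}(\langle \boldsymbol{w}_a,z\rangle)_a$ is $1$-Lipschitz, since the Jacobian of softmax has operator norm at most $1$ and the $\boldsymbol{w}_a$ have unit norm.
\item Consequently, pointwise $|\mathcal{L}^s(x,f)-\mathcal{L}^s(x,f')|\le 2M\sqrt{k}\,\|f(x)-f'(x)\|$.
\end{itemize}
So $L_{lip}=O(M\sqrt k)$. \Cref{lemma: lipschitz-covering} then gives
\begin{align*}
\log N(\epsilon,\mathcal{F}_{\mathcal{L}},L_2(Q))\le \log N(\epsilon/L_{lip},\mathcal{B}^*,L_2(Q)).
\end{align*}
Rescaling the unit ball $\mathcal{B}_0$ and applying \cref{asm: entropy} bounds this by $C(r_\lambda L_{lip}/\epsilon)^{v}$, uniformly over discrete $Q$.

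\textbf{Dudley integral.} The functions in $\mathcal{F}_{\mathcal{L}}$ are bounded by $2M$, so the integral only needs to run up to $2M$. Because $v<2$, the integral converges at $0$:
\begin{align*}
\mathfrak{R}_n(\mathcal{F}_{\mathcal{L}})\lesssim \frac{1}{\sqrt n}\int_0^{2M}\sqrt{(r_\lambda/\epsilon)^{v}}\,d\epsilon \asymp \frac{r_\lambda^{v/2}}{\sqrt n}=O\bigl(n^{-1/2}\lambda^{-v/4}\bigr).
\end{align*}
Since $v<2$ and $\lambda\le1$, we have $\lambda^{-v/4}\le\lambda^{-1/2}$, which yields the claimed $O((n\lambda)^{-1/2})$. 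Only $|I_2|\asymp n$ enters here, and sample splitting keeps the sample i.i.d.

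\textbf{Main obstacle.} The delicate step is that $\mathcal{L}^s$ in the statement uses the true losses $\mathcal{L}$, whereas $\hat f_s$ is chosen with $\hat{\mathcal{L}}$. The approach above still works because the supremum is taken over a deterministic set $\mathcal{B}^*$ that contains $\hat f_s$ almost surely; the radius bound only uses $|\hat{\mathcal{L}}|\le 2M$. One must also check measurability of the supremum. I would handle this through separability of $\mathcal{F}$, or simply by restricting to a countable dense subset of $\mathcal{B}^*$.
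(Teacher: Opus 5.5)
Your proposal is correct and follows the paper's proof essentially step for step. It uses the same localization of $\hat f_s$ in the ball of radius $2\sqrt{M/\lambda}$ by comparison with $f=0$, the same Lipschitz transfer of covering numbers via \cref{lemma: lipschitz-covering} followed by rescaling to $\mathcal{B}_0$ and \cref{asm: entropy}, and the same entropy integral giving $O(n^{-1/2}\lambda^{-v/4})\le O((n\lambda)^{-1/2})$. The differences are cosmetic: you run symmetrization plus Dudley's bound explicitly where the paper cites Theorem 2.14.1 of van der Vaart and Wellner, and you add a measurability remark that the paper omits. You also use a softmax-Jacobian Lipschitz constant instead of the paper's direct gradient bound $4M$; your one-line justification of $1$-Lipschitzness is loose, since $z\mapsto(\langle \boldsymbol{w}_a,z\rangle)_a$ has operator norm $\sqrt{k/(k-1)}$, but only constants are affected.
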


\section{Proofs}\label{apx:proofs}

\begin{proof} [Proof of \cref{prop:minimax-decision}]
To show that $g^*$ obtains the infimum, we need to show that every other function of $\mathcal{D}$ yields a larger or equal risk and that $g^*$ is measurable.
By definition of the min we have for any $a \in \mathcal{A}$ and for any $\boldsymbol{X}\in \mathcal{X}$:
\begin{align*}
    \mathcal{L}(\boldsymbol{X}, g^*(\boldsymbol{X}))\leq \mathcal{L}(\boldsymbol{X}, a)
\end{align*}
This implies also for any $g \in \mathcal{D}$.
\begin{align*}
    \mathcal{L}(\boldsymbol{X}, g^*(\boldsymbol{X}))\leq \mathcal{L}(\boldsymbol{X}, g(\boldsymbol{X}))
\end{align*}
We take the expectation on both sides and by monotonicity of the expectation,
\begin{align*}
    \mathcal{R}(g^*) \leq \mathcal{R}(g).
\end{align*}
Finally, since the action set $\mathcal{A}$ is finite, the pointwise minimizer is measurable. In particular, a deterministic tie-breaking rule (e.g., selecting the action with the smallest index) ensures that the $\argmin$ is uniquely defined and measurable. Therefore  $g^* \in \mathcal{D}$.
\end{proof}

\begin{proof} [Proof of claim in \cref{rmk: CATE-connection}]
By assumption our action space consists of two elements $\mathcal{A}=\{-1,1\}$. To show the equivalence we use the definition of the CATE from \cite{puzhang2021}, let $\text{CATE}(\boldsymbol{X}):=\mathbb{E}[Y(1)-Y(-1) \mid \boldsymbol{X}]$ and $L_{\text{CATE}}(\boldsymbol{X}), U_{\text{CATE}}(\boldsymbol{X})$ be the partial identification bounds of the CATE, see \cref{sec:partial-identification}.
We proceed by showing that the worst-case loss equals the pointwise counterpart of the "risk function" $\mathcal{R}_{\text{upper}}$ defined in Definition 1 of \cite{puzhang2021}, since taking expectation over both sides finishes the proof. Formally we have to show that:
\begin{align}
\label{eq:regret-CATE}
    \mathcal{L}(\boldsymbol{X},a)^+ = \sup_{C'(\boldsymbol{X})\in [L_{\text{CATE}}(\boldsymbol{X}), U_{\text{CATE}}(\boldsymbol{X})]} |C'(\boldsymbol{X})| \mathds{1} \{\text{sgn}(C'(\boldsymbol{X})) \neq \text{sgn}(a)\}
\end{align}
where $\text{sgn}(x) = 1, \quad \forall x>0, x \in \mathbb{R}$ and $-1$ otherwise.

By variation independence and the definition of the CATE in \cite{puzhang2021} the partial identification bounds of the CATE can be decomposed to:
    \begin{align*}
        L_{\text{CATE}}(\boldsymbol{X})=L_1(\boldsymbol{X}) - U_{-1}(\boldsymbol{X}) \\
        U_{\text{CATE}}(\boldsymbol{X})=U_1(\boldsymbol{X}) - L_{-1}(\boldsymbol{X})
    \end{align*}
We are left to check the equality of \cref{eq:regret-CATE} for each case. 
\\ \\
\underline{If $a=1$} then $\mathcal{L}(\boldsymbol{X},1)=U_{-1}(\boldsymbol{X})-L_1(\boldsymbol{X}) = -L_{\text{CATE}}(\boldsymbol{X})$\\
    \par If $0\leq L_{\text{CATE}}(\boldsymbol{X})$, then we also have $0\leq L_{\text{CATE}}(\boldsymbol{X}) \leq U_{\text{CATE}}(\boldsymbol{X})$ this implies that $C'(\boldsymbol{X}) \geq 0$ hence $\text{sgn}(C'(\boldsymbol{X}))=1=a$ and the RHS of \cref{eq:regret-CATE} equals zero. Also $\mathcal{L}(\boldsymbol{X},1) < 0 \implies \mathcal{L}(\boldsymbol{X},1)^+=0$.\\
    \par If $0 > L_{\text{CATE}}(\boldsymbol{X})$ then $\mathds{1} \{\text{sgn}(C'(\boldsymbol{X})) \neq 1\}=1$ for $C'(\boldsymbol{X})\leq 0$ and hence $\sup |C'(\boldsymbol{X})| \mathds{1} \{\text{sgn}(C'(\boldsymbol{X})) \neq 1\} = - L_{\text{CATE}}(\boldsymbol{X})= \mathcal{L}(\boldsymbol{X},1)$ by definition.\\ \\
\underline{Otherwise $a=-1$} then $\mathcal{L}(\boldsymbol{X},-1)=U_{1}(\boldsymbol{X})-L_{-1}(\boldsymbol{X}) = U_{\text{CATE}}(\boldsymbol{X})$\\
     \par If $0 \leq U_{\text{CATE}}(\boldsymbol{X})$ then $\mathds{1} \{\text{sgn}(C'(\boldsymbol{X})) \neq -1)=1$ for $C'(\boldsymbol{X})\geq 0$ and hence $\sup |C'(\boldsymbol{X})| \mathds{1} \{\text{sgn}(C'(\boldsymbol{X})) \neq -1\} = U_{\text{CATE}}(\boldsymbol{X})= \mathcal{L}(\boldsymbol{X},-1)$\\
     \par If $0 \geq U_{\text{CATE}}(\boldsymbol{X})$ then we also have $L_{\text{CATE}}(\boldsymbol{X}) \leq U_{\text{CATE}}(\boldsymbol{X}) \leq 0$  and hence the RHS of \cref{eq:regret-CATE} is zero as $\text{sgn}(C'(\boldsymbol{X}))=-1, \quad \forall C'(\boldsymbol{X})\in [L_{\text{CATE}}(\boldsymbol{X}), U_{\text{CATE}}(\boldsymbol{X})]$. But also $\mathcal{L}(\boldsymbol{X},-1)= U_{\text{CATE}}(\boldsymbol{X}) \leq 0$ implying $\mathcal{L}(\boldsymbol{X},-1)^+=0$
\end{proof}

\begin{proof}[Proof of \cref{prop:risk-error}]
For any $\boldsymbol{X} \in \mathcal{X}$,
\begin{align*}
\mathcal{L}(\boldsymbol{X},g(\boldsymbol{X})) - \mathcal{L}^*(\boldsymbol{X},g(\boldsymbol{X}))&=  \mathcal{L}(\boldsymbol{X},g(\boldsymbol{X})) - \left[\max_a \mu_a(\boldsymbol{X}) - \mu_{g(\boldsymbol{X})}(\boldsymbol{X})\right]\\
&= \left[\max_{b \in \mathcal{A} \setminus \{g(\boldsymbol{X})\}} U_b(\boldsymbol{X})-\max_a \mu_a(\boldsymbol{X})\right] + [\mu_{g(\boldsymbol{X})}(\boldsymbol{X})-L_{g(\boldsymbol{X})}(\boldsymbol{X})]\\
&\leq \max_{b \in \mathcal{A} \setminus \{g(\boldsymbol{X})  \}}[U_b(\boldsymbol{X})-\mu_b(\boldsymbol{X})] + [\mu_{g(\boldsymbol{X})}(\boldsymbol{X})-L_{g(\boldsymbol{X})}(\boldsymbol{X})] \\
&\leq \max_{b \in \mathcal{A} \setminus \{g(\boldsymbol{X})  \}} W(b,\boldsymbol{X})+W(g(\boldsymbol{X}),\boldsymbol{X}) \\
&\leq \sum_a W(a,\boldsymbol{X}).
\end{align*}
This chain of pointwise inequalities shows that the difference between the worst-case loss and the oracle loss is bounded at any point $\boldsymbol{X}$. Specifically, it is less than or equal to the sum of all individual bound widths for a given $\boldsymbol{X}$.
Taking expectations over $\boldsymbol{X}$ on both sides completes the proof.
\end{proof}

\begin{proof} [Proof of \cref{thm:bounding-excess-loss}]
Recall the risk of the minimax policy (\cref{prop:minimax-decision}):
\begin{align*}
\mathcal{R}(g^*) = \mathbb{E}[\min_{a\in \mathcal{A}} \mathcal{L}(\boldsymbol{X}, a)].
\end{align*}
We first show $\mathcal{R}(g^*) = \inf_{f \in \mathcal{G}} \mathcal{R}^s(f)$ by constructing a converging series in $\mathcal{G}$ so its surrogate risk converges to $\mathcal{R}(g^*)$.
To construct this series, consider the space of weight vectors whose components sum to $1$, that is
\begin{align*}
    \Delta^{k}:=\{\boldsymbol{p}\in \mathbb{R}_{\geq0}^k \colon  \| \boldsymbol{p} \|_1 =1\}.
\end{align*}
Notice that the softmax weights given by \cref{eq:softmax-weights} lie in this space, i.e. $\forall f \in \mathcal{G}, \forall \boldsymbol{X} \in \mathcal{X}\colon \quad  \boldsymbol{p}(\boldsymbol{X},f) \in \Delta^k$.
By \cref{eq:loss-function}, the pointwise surrogate loss is $\mathcal{L}^s(\boldsymbol{X}, f) = \sum_{a=1}^k p_a(\boldsymbol{X},f)\mathcal{L}(\boldsymbol{X},a)$, a convex combination of worst-case losses. Since $\boldsymbol{p}(\boldsymbol{X},f) \in \Delta^k$, we have $\mathcal{L}^s(\boldsymbol{X}, f) \geq \min_{a \in \mathcal{A}} \mathcal{L}(\boldsymbol{X},a)$ for all $f$, with equality when the softmax concentrates entirely on $g^*(\boldsymbol{X})$, i.e.\ $\boldsymbol{p}(\boldsymbol{X},f) = \boldsymbol{e}_{g^*(\boldsymbol{X})}$, where $\boldsymbol{e}_{j}$ is the $j$-th standard basis vector in $\mathbb{R}^k$ (uniqueness follows from assumption 1).

The minimax policy $g^*(\boldsymbol{X})$ is a measurable function, as well as the map $\phi: \mathcal{A} \to \mathbb{R}^{k-1}$ defined in \eqref{eq: immersion}. Then $f_{g^*} := \phi \circ g^*$ is also measurable and thus in $\mathcal{G}$.
Consider the sequence $f_m(\boldsymbol{X}) := m \cdot f_{g^*}(\boldsymbol{X})$ for $m  \in \mathbb{N}$. This sequence is measurable and also in $\mathcal{G}$.
As $m \to \infty$, the softmax vector converges to:
\begin{align}
p_{g^*(X)}(\boldsymbol{X}, f_m) = \frac{\exp(m) }{ \exp(m) + (k-1) \exp(\frac{-m}{k-1})} \to 1,
\end{align}
where we have used the explicitly calculated inner product:
\begin{align*}
    \langle \boldsymbol{w}_a, \boldsymbol{w}_b \rangle =
    \begin{cases}
        1 & \text{if } a=b, \\
        \frac{-1}{k-1} & \text{if } a \neq b .
    \end{cases}
\end{align*}

Consequently $p_a(\boldsymbol{X}, f_m) \to 0$ for $a \neq g^*(\boldsymbol{X})$ as they have to sum to one. 
This shows $\boldsymbol{p}(\boldsymbol{X}, f_m) \to \boldsymbol{e}_{g^*(\boldsymbol{X})}$, where we have used assumption 1 to ensure the limit is unique.
Thus, the pointwise surrogate loss converges: $\lim_{m \to \infty} \mathcal{L}^s(\boldsymbol{X}, f_m) = \min_a \mathcal{L}(\boldsymbol{X}, a)$.
By Assumption 2 and the Dominated Convergence Theorem, we can interchange limit and expectation:
\begin{align*}
\lim_{m \to \infty} \mathcal{R}^s(f_m) = \mathbb{E}\left[\lim_{m \to \infty} \mathcal{L}^s(\boldsymbol{X}, f_m)\right] = \mathbb{E}[\min_{a \in \mathcal{A}} \mathcal{L}(\boldsymbol{X}, a)] = \mathcal{R}(g^*).
\end{align*}
Since $\mathcal{R}^s(f) \geq \mathcal{R}(g^*)$ for all $f$, this sequence shows that $\inf_{f \in \mathcal{G}} \mathcal{R}^s(f) = \mathcal{R}(g^*)$.

Second we establish the inequality in the theorem. 
For all $\boldsymbol{X}$, $a$ and all measurable functions $f$ by \cref{lemma: probability_bound} the term $p_{g_f(\boldsymbol{X})}(\boldsymbol{X}, f)$ can be bounded below by $\frac{1}{k}$. 

Let $f \in \mathcal{G}$ be arbitrary we can bound point wise
\begin{align*}
    \mathcal{L}(\boldsymbol{X},g_f(\boldsymbol{X}))-\mathcal{L}(\boldsymbol{X}, g^*(\boldsymbol{X}))
    &\leq k \cdot p_{g_f(\boldsymbol{X})} (\boldsymbol{X},f) \cdot (\mathcal{L}(\boldsymbol{X},g_f(\boldsymbol{X}))-\mathcal{L}(\boldsymbol{X}, g^*(\boldsymbol{X})))\\
    &\leq k \cdot \sum_{a=1}^k p_a(\boldsymbol{X},f)\cdot (\mathcal{L}(\boldsymbol{X},a)-\mathcal{L}(\boldsymbol{X}, g^*(\boldsymbol{X})))\\
    &=k  \cdot  \left( \sum_{a=1}^k p_a(\boldsymbol{X},f)\cdot \mathcal{L}(\boldsymbol{X},a)-\mathcal{L}(\boldsymbol{X}, g^*(\boldsymbol{X}))\right)
\end{align*}
Taking expectation yields
\begin{align*}
    \mathcal{R}(g_f)- \mathcal{R}(g^*)&=\mathbb{E}[\mathcal{L}(\boldsymbol{X},g_f(\boldsymbol{X}))-\mathcal{L}(\boldsymbol{X}, g^*(\boldsymbol{X}))]\\
    &\leq k  \cdot \mathbb{E}\left[ \sum_{a=1}^k p_a(\boldsymbol{X},f)\cdot \mathcal{L}(\boldsymbol{X},a)-\mathcal{L}(\boldsymbol{X}, g^*(\boldsymbol{X}))\right]\\
    &=k \cdot ( \mathcal{R}^s(f) -\mathcal{R}(g^*))\\
    &= k \cdot (\mathcal{R}^s(f)-\inf_{f'\in \mathcal{G}}\mathcal{R}^s(f'))
\end{align*}
where we have use the first result of the theorem in the last line to replace $\mathcal{R}(g^*)$. 
This completes the proof.
\end{proof}

\begin{proof}[Proof of \cref{lemma: probability_bound}]
Let $a^* := g_f(\boldsymbol{X})$ denote the action selected by the decision rule. By definition, $a^*$ maximizes the inner product, implying $\langle \boldsymbol{w}_{a^*}, f(\boldsymbol{X}) \rangle \geq \langle \boldsymbol{w}_b, f(\boldsymbol{X}) \rangle$ for all $b \in \mathcal{A}$.

Since the exponential function is monotonically increasing, we have $\exp(\langle \boldsymbol{w}_{a^*}, f(\boldsymbol{X}) \rangle) \geq \exp(\langle \boldsymbol{w}_b, f(\boldsymbol{X}) \rangle)$ for all $b$.
By a direct calculation:
\begin{align*}
    p_{a^*}(\boldsymbol{X}, f) &= \frac{\exp(\langle \boldsymbol{w}_{a^*}, f(\boldsymbol{X}) \rangle)}{\sum_{b=1}^k \exp(\langle \boldsymbol{w}_b, f(\boldsymbol{X}) \rangle)} \\
    &\geq \frac{\exp(\langle \boldsymbol{w}_{a^*}, f(\boldsymbol{X}) \rangle)}{\sum_{b=1}^k \exp(\langle \boldsymbol{w}_{a^*}, f(\boldsymbol{X}) \rangle)} \\
    &= \frac{\exp(\langle \boldsymbol{w}_{a^*}, f(\boldsymbol{X}) \rangle)}{k \cdot \exp(\langle \boldsymbol{w}_{a^*}, f(\boldsymbol{X}) \rangle)} \\
    &= \frac{1}{k}.
\end{align*}
\end{proof}

\oracleinequality*

\begin{proof}[Proof of \cref{thm:oracle-inequality}]
Through out the proof, if we do not specify the norm $\Vert \cdot \Vert$ we implicitly refer to the norm of the space $\mathcal{F}$ that is $\Vert\cdot \Vert_\mathcal{F}$.

The idea of the proof is to decompose $\quad \mathcal{R}^s(\hat{f}_s)-\inf_{f\in \mathcal{F}}\mathcal{R}^s(f)$ into several terms accounting for a small change between them. We then bound these terms desperately. Before we start, let us state some technical details.
To decouple the estimation of the worst-case loss from the optimization step, we use sample splitting. Split the index set $I=\{1,\dots,n\}$ into two disjoint parts $I_1 \sqcup I_2 = I$. Let $O_i:=(\boldsymbol{X}_i,A_i,Y_i,Z_i)$ denote the full observed data for unit $i$, and define the sigma-algebras
\begin{align*}
    \mathcal{G}_1:=\sigma(O_i:i\in I_1), \qquad \mathcal{G}_2:=\sigma(O_i:i\in I_2).
\end{align*}
Due to the split and that the observations are i.i.d., $\mathcal{G}_1$ and $\mathcal{G}_2$ are independent. Throughout the proof, $\boldsymbol{X}$ denotes a generic covariate draw with the same distribution as each $\boldsymbol{X}_i$. We use $I_1$ to estimate the worst-case loss $\hat{\mathcal{L}}$ and $I_2$ to minimize the empirical surrogate risk \cref{eq:emp-loss-function}. Denote $\hat{f}_s$ as the solution where we have performed sample splitting. In particular due to \cref{asm:emp-minimizer} we can write $\hat{f}_s$ as the $\argmin$ instead of the $\inf$:
\begin{align*}
\hat{f}_s:=\argmin_{f\in \mathcal{F}} \hat{\mathcal{R}}^s (f)=\argmin_{f\in \mathcal{F}} \frac{1}{|I_2|}\sum_{i \in I_2}\hat{\mathcal{L}}^s(\boldsymbol{X}_i, f) + \lambda \Vert f \Vert ^2,
\end{align*}
where $\mathcal{L}^s(\boldsymbol{X}, f) = \sum_{a=1}^k p_a(\boldsymbol{X},f)\mathcal{L}(\boldsymbol{X},a)$ is the surrogate loss (\cref{eq:loss-function}) and $\hat{\mathcal{L}}^s(\boldsymbol{X}, f) := \sum_{a=1}^k p_a(\boldsymbol{X},f)\hat{\mathcal{L}}(\boldsymbol{X},a)$ its estimated counterpart. Our goal is to bound its norm.

Consider the zero function $f_0=0$ and any $\boldsymbol{X}\in \mathcal{X}, a\in \mathcal{A}$ we have
$$p_a(\boldsymbol{X},f_0)= \frac{\exp(\langle \boldsymbol{w}_a, f_0 \rangle)}{\sum_{b=1}^k \exp(\langle \boldsymbol{w}_b, f_0) \rangle)}=\frac{1}{\sum_{b=1}^k 1}=\frac{1}{k}$$
which implies $\hat{\mathcal{L}}^s(\boldsymbol{X},f_0)=\sum_{a=1}^k \frac{1}{k}\hat{\mathcal{L}}(\boldsymbol{X},a)\leq 2M$
where $\hat{\mathcal{L}}(\boldsymbol{X},a) \leq 2M$ by \cref{asm: bounds-bounds} and the definition of $\mathcal{L}$.
Since $\hat{\mathcal{L}}(\boldsymbol{X},a) \geq -2M$ by the same assumption, we also have $\hat{\mathcal{L}}^s(\boldsymbol{X},f) \geq -2M$ for all $f$.
By minimality of $\hat{f}_s$ and the lower bound on the surrogate loss we have
\begin{align*}
 -2M + \lambda\Vert \hat{f}_s \Vert^2 &\leq \hat{\mathcal{R}}^s(\hat{f}_s) \leq \hat{\mathcal{R}}^s (f_0)=\frac{1}{\vert I_2 \vert} \sum_{i\in I_2} \hat{\mathcal{L}}^s(\boldsymbol{X}_i, f_0)+\lambda \underbrace{\Vert f_0 \Vert^2}_{=0} \leq 2M,
\end{align*}
hence $\lambda\Vert \hat{f}_s \Vert^2 \leq 4M$.
This gives us a bound for $\Vert \hat{f}_s\Vert \leq \sqrt{\frac{4M}{\lambda}}$.
Define
$$
\mathcal{B}^*:=\left\{f \in \mathcal{F} \colon \Vert f \Vert \leq 2\cdot \sqrt{\frac{M}{\lambda}}\right\},
$$
and observe that $\hat{f}_s \in \mathcal{B}^*$. 

To start with the decomposition we need a term that is in ``between'' $\mathcal{R}^s(\hat{f}_s)$ and $\inf_{f\in \mathcal{F}} \mathcal{R}^s(f)$. Using \cref{asm: minimizer} this motivates the definition of
\begin{align*}
f^*_\lambda := \argmin_{f\in \mathcal{F}}\mathbb{E}\left[ \mathcal{L}^s(\boldsymbol{X}, f)+\lambda \Vert f \Vert^2 \right],
\end{align*}
as the  minimizer of the theoretical surrogate risk with the regularization term $\lambda$.

We also define the regularization approximation error
\begin{align*}
a_{\mathcal{F}}(\lambda) \coloneqq \inf_{f\in\mathcal{F}} \{ \mathcal{R}^s(f)+\lambda\|f\|^2\} - \inf_{f\in\mathcal{F}}\mathcal{R}^s(f).
\end{align*}
Since $f^*_\lambda$ attains the infimum of the regularized objective we have,
\begin{align*}
a_{\mathcal{F}}(\lambda) = \mathcal{R}^s(f_\lambda^*)+\lambda\|f_\lambda^*\|^2 - \inf_{f\in\mathcal{F}}\mathcal{R}^s(f).
\end{align*}

The objective can be decomposed by introducing $f^*_\lambda$:

\begin{align}
    &\quad \mathcal{R}^s(\hat{f}_s)-\inf_{f\in \mathcal{F}} \mathcal{R}^s(f) \label{eq: loss-diff}\\
    &=\mathbb{E}[\mathcal{L}^s(\boldsymbol{X},\hat{f}_s)] - \inf_{f\in \mathcal{F}} \mathcal{R}^s(f) \notag\\
    &=\underbrace{\mathbb{E}[\mathcal{L}^s(\boldsymbol{X},\hat{f}_s)] - \mathbb{E}[\mathcal{L}^s(\boldsymbol{X},f^*_\lambda)] - \lambda \Vert f^*_\lambda \Vert^2}_{(I)}+\underbrace{\mathbb{E}[\mathcal{L}^s(\boldsymbol{X},f^*_\lambda)] + \lambda \Vert f^*_\lambda \Vert^2-\inf_{f\in \mathcal{F}} \mathcal{R}^s(f)}_{a_{\mathcal{F}}(\lambda)} \notag
\end{align}

By \cref{asm: minimizer}, $f^*_\lambda \in \mathcal{F}$ is bounded with finite norm.
We bound $(I)$ by inflating it with terms that cancel out.
\begin{align}
    (I)&=\mathbb{E}[\mathcal{L}^s(\boldsymbol{X},\hat{f}_s)] - \mathbb{E}[\mathcal{L}^s(\boldsymbol{X},f^*_\lambda)] - \lambda \Vert f^*_\lambda \Vert^2 \nonumber \\
    &\leq \mathbb{E}[\mathcal{L}^s(\boldsymbol{X},\hat{f}_s)] +\lambda\Vert \hat{f}_s\Vert^2- \mathbb{E}[\mathcal{L}^s(\boldsymbol{X},f^*_\lambda)] - \lambda \Vert f^*_\lambda \Vert^2 \nonumber \\
    &= \mathbb{E}\left[ \frac{1}{\vert I_2\vert}\sum_{i\in I_2}\mathcal{L}^s(\boldsymbol{X}_i, f^*_\lambda) - \mathbb{E}[\mathcal{L}^s(\boldsymbol{X},f^*_\lambda)]\right] \label{eq:Q1}\\
    &+\mathbb{E}\left[ \frac{1}{\vert I_2\vert}\sum_{i\in I_2}\hat{\mathcal{L}}^s(\boldsymbol{X}_i, f^*_\lambda) - \frac{1}{\vert I_2\vert}\sum_{i\in I_2}\mathcal{L}^s(\boldsymbol{X}_i, f^*_\lambda)\right] \label{eq:Q2}\\
    &+\mathbb{E}\left[ \frac{1}{\vert I_2\vert}\sum_{i\in I_2}\hat{\mathcal{L}}^s(\boldsymbol{X}_i, \hat{f}_s) + \lambda\Vert \hat{f}_s\Vert^2- \frac{1}{\vert I_2\vert}\sum_{i\in I_2}\hat{\mathcal{L}}^s(\boldsymbol{X}_i, f^*_\lambda) - \lambda \Vert f^*_\lambda \Vert^2\right] \label{eq:Q3}\\
    &+\mathbb{E}\left[ \frac{1}{\vert I_2\vert}\sum_{i\in I_2}\mathcal{L}^s(\boldsymbol{X}_i, \hat{f}_s) - \frac{1}{\vert I_2\vert}\sum_{i\in I_2}\hat{\mathcal{L}}^s(\boldsymbol{X}_i, \hat{f}_s)\right] \label{eq:Q4}\\
    &+\mathbb{E}\left[ \mathbb{E}[\mathcal{L}^s(\boldsymbol{X},\hat{f}_s)] - \frac{1}{\vert I_2\vert}\sum_{i\in I_2}\mathcal{L}^s(\boldsymbol{X}_i, \hat{f}_s)\right] \label{eq:Q5}
\end{align}
We bound each term individually

Let us start with \eqref{eq:Q3} since $\hat{f}_s$ is the minimizer for the fixed split data for any function in $\mathcal{F}$ it follows 
$$\eqref{eq:Q3} \leq 0,$$
hence also for $f^*_\lambda$.

We now condition on $\mathcal{G}_1$. By the tower property,
\begin{align*}
    \eqref{eq:Q1}
    &=\mathbb{E}\left[\mathbb{E}\left[\frac{1}{\vert I_2\vert}\sum_{i\in I_2}\mathcal{L}^s(\boldsymbol{X}_i, f^*_\lambda)\,\middle|\, \mathcal{G}_1\right]\right] - \mathbb{E}[\mathcal{L}^s(\boldsymbol{X},f^*_\lambda)]\\
    &=\mathbb{E}\left[\frac{1}{\vert I_2\vert}\sum_{i\in I_2}\mathbb{E}\left[\mathcal{L}^s(\boldsymbol{X}_i, f^*_\lambda)\,\middle|\, \mathcal{G}_1\right]\right] - \mathbb{E}[\mathcal{L}^s(\boldsymbol{X},f^*_\lambda)]\\
    &=\mathbb{E}\left[\frac{1}{\vert I_2\vert}\sum_{i\in I_2}\mathbb{E}\left[\mathcal{L}^s(\boldsymbol{X}, f^*_\lambda)\right]\right] - \mathbb{E}[\mathcal{L}^s(\boldsymbol{X},f^*_\lambda)] = 0
\end{align*}
where we used that $f^*_\lambda$ does not depend on the sample and that, for every $i\in I_2$, $\boldsymbol{X}_i$ is independent of $\mathcal{G}_1$. Notice that this argument is only possible because the function does not depend on the samples; this is different to \eqref{eq:Q5} where we have to use empirical process theory to bound the term.

For any $f \in \mathcal{F}$ we have 
\begin{align*}
    &\quad \mathbb{E}\left[ \frac{1}{\vert I_2\vert}\sum_{i\in I_2}\mathcal{L}^s(\boldsymbol{X}_i, f) - \frac{1}{\vert I_2\vert}\sum_{i\in I_2}\hat{\mathcal{L}}^s(\boldsymbol{X}_i, f)\right]\\
    &= \mathbb{E}\left[\mathbb{E}\left[\frac{1}{\vert I_2\vert}\sum_{i\in I_2}\left(\mathcal{L}^s(\boldsymbol{X}_i, f) - \hat{\mathcal{L}}^s(\boldsymbol{X}_i, f)\right)\,\middle|\, \mathcal{G}_1\right]\right]\\
    &\leq \mathbb{E}\left[\frac{1}{\vert I_2\vert}\sum_{i\in I_2}\mathbb{E}\left[\left|\mathcal{L}^s(\boldsymbol{X}_i, f) - \hat{\mathcal{L}}^s(\boldsymbol{X}_i, f)\right|\,\middle|\, \mathcal{G}_1\right]\right]\\
    &= \mathbb{E}\left[\mathbb{E}\left[\left|\mathcal{L}^s(\boldsymbol{X}, f) - \hat{\mathcal{L}}^s(\boldsymbol{X}, f)\right|\,\middle|\, \mathcal{G}_1\right]\right] \\
    &= \mathbb{E}\left[\left|\mathcal{L}^s(\boldsymbol{X}, f) - \hat{\mathcal{L}}^s(\boldsymbol{X}, f)\right|\right] \\
    &\leq O(n^{-\alpha}+n^{-\beta}),
\end{align*}
where the last inequality holds due to \cref{lemma: bound-risk-diff}.
Since the inequality holds for any $f \in \mathcal{F}$ and in particular for $f^*_\lambda, \hat{f}_s$ we can bound 
\begin{align*}
    \eqref{eq:Q2}\leq O(n^{-\alpha}+n^{-\beta})
\end{align*}
as well as 
\begin{align*}
    \eqref{eq:Q4}\leq O(n^{-\alpha}+n^{-\beta})
\end{align*}
By \cref{lemma: bound-empirical} we have 
\[
\eqref{eq:Q5} \leq O\left(\frac{1}{\sqrt{n \lambda}}\right) 
\]
Finally, combining the above results we can bound \eqref{eq: loss-diff} and the theorem follows:

\begin{align*}
    \eqref{eq: loss-diff} &= (I)+ a_\mathcal{F}(\lambda)\\
    &\leq \eqref{eq:Q1} +\eqref{eq:Q2}+\eqref{eq:Q3}+\eqref{eq:Q4}+\eqref{eq:Q5}+a_\mathcal{F}(\lambda)\\
    &\leq 0 + 0 +  O(n^{-\alpha}+n^{-\beta}) +  O(n^{-\alpha}+n^{-\beta}) + O\left(\frac{1}{\sqrt{n \lambda}}\right)  + a_\mathcal{F}
\end{align*}
\end{proof}

\begin{proof}[Proof of \cref{lemma: bound-risk-diff}]
We start by decomposing the loss
\begin{align*}
    \vert \mathcal{L}^s(\boldsymbol{X},f)-\hat{\mathcal{L}}^s(\boldsymbol{X},f)\vert
    &\leq \sum_{a=1}^k |\underbrace{p_a(\boldsymbol{X},f)}_{\leq 1}| \, \vert \mathcal{L}(\boldsymbol{X},a) - \hat{\mathcal{L}}(\boldsymbol{X},a)\vert\\
    &\leq \sum_{a=1}^k \vert \mathcal{L}(\boldsymbol{X},a) - \hat{\mathcal{L}}(\boldsymbol{X},a)\vert 
\end{align*}
Writing out the summand explicitly:
\begin{align}
    &\quad \vert \mathcal{L}(\boldsymbol{X},a)-\hat{\mathcal{L}}(\boldsymbol{X},a) \vert \nonumber\\
    &= \bigg|\max_{b \in \mathcal{A}\setminus \{a\}} U_b(\boldsymbol{X}) - L_a(\boldsymbol{X}) -  \left(  \max_{b' \in \mathcal{A}\setminus \{a\}} \hat{U}_{b'}(\boldsymbol{X}) - \hat{L}_a(\boldsymbol{X}) \right) \bigg| \nonumber\\
    &\leq \bigg|\max_{b \in \mathcal{A}\setminus \{a\}} U_b(\boldsymbol{X}) - \max_{b' \in \mathcal{A}\setminus \{a\}} \hat{U}_{b'}(\boldsymbol{X}) \bigg| + \bigg|  \hat{L}_a(\boldsymbol{X}) -L_a(\boldsymbol{X}) \bigg| \label{eq: risk-dif}
\end{align}
Let $b=\arg \max_{b \in \mathcal{A}\setminus \{a\}} U_b(\boldsymbol{X})$ and $b'=\arg \max_{b \in \mathcal{A}\setminus \{a\}} \hat{U}_b(\boldsymbol{X})$, then there are two cases either $b=b'$ for which we have:
\begin{align*}
    \eqref{eq: risk-dif} = \vert U_b(\boldsymbol{X}) - \hat{U}_{b}(\boldsymbol{X}) \vert + \vert  \hat{L}_a(\boldsymbol{X}) -L_a(\boldsymbol{X}) \vert
\end{align*}
or $b\neq b'$ for which we have that either $U_b(\boldsymbol{X})\leq \hat{U}_{b'}(\boldsymbol{X})$  hence
\begin{align*}
    \eqref{eq: risk-dif} &= \vert U_b(\boldsymbol{X}) - \hat{U}_{b'}(\boldsymbol{X}) \vert + \vert  \hat{L}_a(\boldsymbol{X}) -L_a(\boldsymbol{X}) \vert \\
    &\leq \vert U_{b'}(\boldsymbol{X}) - \hat{U}_{b'}(\boldsymbol{X}) \vert + \vert  \hat{L}_a(\boldsymbol{X}) -L_a(\boldsymbol{X}) \vert
\end{align*}
or we have $\hat{U}_{b'}(\boldsymbol{X}) < U_b(\boldsymbol{X}) $  hence
\begin{align*}
    \eqref{eq: risk-dif} &= \vert U_b(\boldsymbol{X}) - \hat{U}_{b'}(\boldsymbol{X}) \vert + \vert  \hat{L}_a(\boldsymbol{X}) -L_a(\boldsymbol{X}) \vert \\
    &\leq \vert U_{b}(\boldsymbol{X}) - \hat{U}_{b}(\boldsymbol{X}) \vert + \vert  \hat{L}_a(\boldsymbol{X}) -L_a(\boldsymbol{X}) \vert
\end{align*}
Hence in either cases \eqref{eq: risk-dif} can be bounded in expectation using \cref{asm: convergence} and the convergence rate follows. 
\end{proof}

\begin{proof}[Proof of \cref{lemma: lipschitz-covering}]
    Let $N_{c} =  N\left(\frac{\epsilon}{L_{lip}} , \mathcal{B}^*, \Vert \cdot \Vert_{\mathcal{B}^*}\right)$. There must exist a covering $f_1, \dots, f_{N_{c}}$ such that for any $f \in \mathcal{B}^*$  there exits a $f_i$ such that $\Vert f -f_i \Vert_{\mathcal{B}^*} \leq \frac{\epsilon}{L_{lip}}$.
    For any $g \in \mathcal{F}_{\mathcal{L}}$ there must exist a $f$ such that $g=\Phi(f)$ for some $f \in \mathcal{B}^*$. If we choose the closest $f_i$ to $f$ in $\Vert \cdot \Vert_{\mathcal{B}^*}$ we have
    \begin{align*}
        \Vert g-\Phi(f_i) \Vert_{\mathcal{F}_{\mathcal{L}}}= \Vert \Phi(f)- \Phi(f_i)\Vert_{\mathcal{F}_{\mathcal{L}}}\leq L_{lip} \Vert f- f_i\Vert_{\mathcal{B}^*} \leq \epsilon,
    \end{align*}
    by the assumption of the covering of $\mathcal{B}^*$. Hence $\Phi(f_i)$ is an $\epsilon$-covering and the lemma follows.
\end{proof}

\begin{proof}[Proof of \cref{lemma: bound-empirical}]
Define the set of functions motivated by the proof of \cref{thm:oracle-inequality}
$$
\mathcal{B}^*:=\left\{f \in \mathcal{F} \colon \Vert f \Vert \leq 2\cdot \sqrt{\frac{M}{\lambda}}\right\}.
$$
In particular notice that $\hat{f}_s \in \mathcal{B}^*$ by construction.
Then
\begin{align}
     \mathbb{E}\left[ \mathbb{E}[\mathcal{L}^s(\boldsymbol{X},\hat{f}_s)] - \frac{1}{\vert I_2\vert}\sum_{i\in I_2}\mathcal{L}^s(\boldsymbol{X}_i, \hat{f}_s)\right] &\leq \mathbb{E}\left[ \sup_{f \in \mathcal{B}^*} \bigg|\mathbb{E}[\mathcal{L}^s(\boldsymbol{X},f)] - \frac{1}{\vert I_2\vert}\sum_{i\in I_2}\mathcal{L}^s(\boldsymbol{X}_i, f) \bigg| \right] \label{eq:sup-bound}
\end{align}
Consider the function class $\mathcal{F}_{\mathcal{L}} := \{\mathcal{L}^s(\cdot, f) \mid  f\in \mathcal{B}^*\}$.
By defining the empirical process as 
\begin{align*}
\|\mathbb{G}_{I_2}\|_{\mathcal{F}_{\mathcal{L}}}
&:= \sup_{g \in \mathcal{F}_{\mathcal{L}}} 
\left| \mathbb{G}_{I_2} g \right| \\
&= \sup_{f \in \mathcal{B}^*} 
\left| \mathbb{G}_{I_2}\bigl(\mathcal{L}^s(\cdot, f)\bigr) \right|.
\end{align*}
Then
\begin{align*}
\sup_{f \in \mathcal{B}^*} 
\left| P\mathcal{L}^s(\cdot, f) 
- \mathbb{P}_{I_2}\mathcal{L}^s(\cdot, f) \right|
=
\frac{1}{\sqrt{|I_2|}}
\|\mathbb{G}_{I_2}\|_{\mathcal{F}_{\mathcal{L}}}.
\end{align*}

and we can write \cref{eq:sup-bound} in terms of the empirical process,

\begin{align*}
    \mathbb{E}\left[ \sup_{f \in \mathcal{B}^*} \bigg|\mathbb{E}[\mathcal{L}^s(\boldsymbol{X},f)] - \frac{1}{\vert I_2\vert}\sum_{i\in I_2}\mathcal{L}^s(\boldsymbol{X}_i, f) \bigg| \right]
    =  \mathbb{E}\left[ \frac{1}{\sqrt{|I_2|}} \Vert\mathbb{G}_{I_2} \Vert_{\mathcal{F}_{\mathcal{L}}} \right]
\end{align*}

To apply theorem 2.14.1 of \cite{vanderVaart1996} we need an envelop function for $\vert \mathcal{L}^s(\boldsymbol{X}, f) \vert$ for all $\mathcal{L}^s(\cdot, f) \in \mathcal{F}_{\mathcal{L}}$. The function $F(\boldsymbol{X}) := 2M $ is a valid envelop since,
\begin{align*}
    \vert \mathcal{L}^s(\boldsymbol{X},f)\vert
    &\leq \sum_{a=1}^k \vert p_a(\boldsymbol{X},f) \mathcal{L}(\boldsymbol{X},a)\vert\\
    &\leq \sum_{a=1}^k \vert  p_a(\boldsymbol{X},f) \cdot 2M\vert \\
    &=2M
\end{align*}

Applying Theorem 2.14.1 of \cite{vanderVaart1996}  we have 
\begin{align}
\label{eq: entropy-integral}
\mathbb{E} \Vert\mathbb{G}_{I_2} \Vert_{\mathcal{F}_{\mathcal{L}}}\leq O\left( \sup_Q \int_{0}^{1} \sqrt{1+\log N(\epsilon \cdot \|F\|_{Q,2}, \mathcal{F}_{\mathcal{L}}, L_2(Q))} \,d\epsilon \Vert F \Vert_{P,2} \right),
\end{align}
where $\|f\|_{Q,2} := \left( \int |f|^2 \, dQ \right)^{1/2}$ and $\|f\|_{P,2} := \left( \int |f|^2 \, dP \right)^{1/2}$ denote the $L_2(Q)$ and $L_2(P)$ norms, respectively.
Notice that the envelop function satisfies $\Vert F \Vert_{P,2} = \sqrt{\mathbb{E}[F(\boldsymbol{X})^2]}=2M$.

To calculate the integral we want to use \cref{lemma: lipschitz-covering}. For this, fix $\boldsymbol X$ and define
\begin{align*}
\psi_{\boldsymbol X}(v) \coloneqq \sum_{a=1}^k p_a(\boldsymbol X,v)\mathcal{L}(\boldsymbol X,a), \qquad v\in\mathbb R^{k-1},
\end{align*}
so that
\begin{align*}
\mathcal{L}^s(\boldsymbol{X},f)=\psi_{ \boldsymbol{X} }(f(\boldsymbol{X})).
\end{align*}
Differentiating with respect to $v$ yields
\begin{align*}
\nabla_v \psi_{\boldsymbol X}(v) = \sum_{a=1}^k \mathcal{L}(\boldsymbol X,a)\,p_a(\boldsymbol X,v) \left( \boldsymbol w_a-\sum_{b=1}^k p_b(\boldsymbol X,v)\boldsymbol w_b \right).
\end{align*}
Using \cref{asm: bounds-bounds} and $\|\boldsymbol w_a\|_2=1$, we obtain
\begin{align*}
\|\nabla_v \psi_{\boldsymbol X}(v)\|_2 
&\leq \sum_{a=1}^k 2M\,p_a(\boldsymbol X,v) \left( \|\boldsymbol w_a\|_2+ \left\|\sum_{b=1}^k p_b(\boldsymbol X,v)\boldsymbol w_b\right\|_2 \right) \\
&\leq \sum_{a=1}^k 2M\,p_a(\boldsymbol X,v)(1+1) =4M.
\end{align*}
Hence $\psi_{\boldsymbol X}$ is $4M$-Lipschitz on $\mathbb R^{k-1}$. Therefore, for any $f_1,f_2$,
\begin{align*}
|\mathcal{L}^s(\boldsymbol X,f_2)-\mathcal{L}^s(\boldsymbol X,f_1)|
&= |\psi_{\boldsymbol X}(f_2(\boldsymbol X))-\psi_{\boldsymbol X}(f_1(\boldsymbol X))| \\
&\leq 4M\,\|f_2(\boldsymbol X)-f_1(\boldsymbol X)\|_2.
\end{align*}
Squaring and integrating with respect to $Q$ gives
\begin{align*}
\|\mathcal{L}^s(\cdot,f_2)-\mathcal{L}^s(\cdot,f_1)\| _{L_2(Q)}
\leq 4M\,\|f_2-f_1\|_{L_2(Q)}.
\end{align*}

This shows that the map $f \mapsto \mathcal{L}^s(\cdot,f)$ from $\mathcal{B}^* \to \mathcal{F}_{\mathcal{L}}$ is Lipschitz with constant $L_{lip}=4M$ with respect to the $L_2(Q)$ norm.
This allows us to use \cref{lemma: lipschitz-covering} to bound
\begin{align*}
N(\epsilon \cdot \|F\|_{Q,2}, \mathcal{F}_{\mathcal{L}}, L_2(Q)) \leq N\left(\frac{\epsilon}{L_{lip}} \cdot \|F\|_{Q,2}, \mathcal{B}^*, L_2(Q)\right)
\end{align*}

We can bound this covering number further using $\mathcal{B}_0$ by plugging in the the definition of $\mathcal{B}^*$. 

\begin{align*}
\log N\left(\frac{\epsilon}{L_{lip}}\|F\|_{Q,2},\mathcal{B}^*,L_2(Q)\right)
&=\log N\left(\frac{\epsilon \|F\|_{Q,2}\sqrt{\lambda}}{2L_{lip}\sqrt{M}},\mathcal{B}_0,L_2(Q)\right) \\
&\leq O\left(\left(\frac{\epsilon \|F\|_{Q,2}\sqrt{\lambda}}{2L_{lip}\sqrt{M}}\right)^{-v} \right) 
=O\left(\lambda^{-v/2}\epsilon^{-v}\right).
\end{align*}

Where we have used \cref{asm: entropy} for the last inequality and the fact that in general $N ( \varepsilon, r\cdot A, d) = N(\frac{\varepsilon}{r}, A,d)$.

The integral becomes \begin{align*}
    \int_0^1 \sqrt{\lambda^{-\frac{v}{2}} \epsilon ^{-v} }d\epsilon= \lambda^{-\frac{v}{4}} \int_0^1\epsilon^{-\frac{v}{2}}d\epsilon = O(\lambda^{-\frac{v}{4}} ) \leq O(\lambda^{-\frac{1}{2}}),
\end{align*}
where we have used the fact that the integral converges for $-v/2<1$ which is satisfied by the assumption $0<v<2$. The last step is only justifiable if $ 0 < \lambda \leq 1$. 
We can bound
\begin{align*}
    \eqref{eq: entropy-integral} \leq O(\lambda^{-\frac{1}{2}} \cdot \Vert F\Vert_{P,2})= O(\lambda^{-\frac{1}{2}} \cdot 2M)= O(\lambda^{-\frac{1}{2}})
\end{align*}
Putting everything together we get
\begin{align*}
    \eqref{eq:Q5} &\leq \mathbb{E}\left[ \frac{1}{\sqrt{|I_2|}} \Vert\mathbb{G}_{I_2} \Vert_{\mathcal{F}_{\mathcal{L}}} \right]\\
    &\leq O\left(\frac{1}{\sqrt{|I_2|}}\cdot \lambda^{-\frac{1}{2}}\right)\\
    &=O\left(\frac{1}{\sqrt{n \lambda}}\right),
\end{align*}
where the last step uses $|I_2|= O(n)$.
\end{proof}

\begin{proof} [Proof of the claim in \cref{ex:gaussian-verification}]
We show each assumption separately.
\paragraph{\Cref{asm: entropy}: }
First we show condition \cref{asm: entropy} is met  for the product hypothesis space $\mathcal{F} = \mathcal{H}^{k-1}$, where $\mathcal{H}$ is the RKHS induced by the Gaussian RBF kernel $k(\boldsymbol{X}, \boldsymbol{t}) = \exp(-\gamma \|\boldsymbol{X} - \boldsymbol{t}\|^2)$ on a compact domain $\mathcal{X} \subset \mathbb{R}^d$.

The proof proceeds in two steps. First, we establish the entropy bound for the unit ball of the scalar component $\mathcal{H}$ using the VC-hull theory from \cite{Kosorok2008}. Second, we extend this result to the product space $\mathcal{F}$.

Let $\mathcal{B}_{\mathcal{H}} = \{h \in \mathcal{H} : \|h\|_{\mathcal{H}} \leq 1\}$ be the unit ball of the scalar RKHS. We identify $\mathcal{B}_{\mathcal{H}}$ as a subset of a VC-hull class. 

Define the VC-Subgraph class $\mathcal{G}$  associated with $\mathcal{B}_\mathcal{H}$ as the set of kernel functions indexed by the covariates:
\begin{align*}
    \mathcal{G} = \{ \phi_{\boldsymbol{t}}(\cdot) : \mathcal{X} \to \mathbb{R} \mid \phi_{\boldsymbol{t}}(\boldsymbol{X}) = \exp(-\gamma \Vert \boldsymbol{X} - \boldsymbol{t}\Vert^2), \, \boldsymbol{t} \in \mathbb{R}^d \}.
\end{align*}
We first show that $\mathcal{G}$ is a VC-subgraph class with finite VC-index $V$. Expanding the exponent, we have:
\begin{align*}
    \phi_{\boldsymbol{t}}(\boldsymbol{X}) = \exp\left( -\gamma (\Vert\boldsymbol{X}\Vert^2 - 2\boldsymbol{X}^\top \boldsymbol{t} + \Vert \boldsymbol{t}\Vert^2) \right).
\end{align*}
The term inside the exponential, $h_{\boldsymbol{t}}(\boldsymbol{X}) = -\Vert\boldsymbol{X}\Vert^2 + 2\boldsymbol{X}^\top \boldsymbol{t} - \Vert\boldsymbol{t}\Vert^2$, belongs to the finite-dimensional vector space of functions spanned by $\{1, X_1, \dots, X_d, \|\boldsymbol{X}\|^2\}$. This space has dimension $d+2$. By Lemma 9.6 of Kosorok \cite{Kosorok2008}, a vector space of functions of dimension $k$ has VC-index $V \leq k+2$. Thus, the family of functions inside the exponent has finite VC-index. Furthermore, since $u \mapsto \exp(\gamma u)$ is a monotone transformation, Lemma 9.9 of Kosorok \cite{Kosorok2008} implies that the class $\mathcal{G}$ preserves the VC-subgraph property with a finite index $V < \infty$.

The unit ball $\mathcal{B}_{\mathcal{H}}$ is contained in the closed convex hull of $\mathcal{G}$. Therefore, $\mathcal{B}_{\mathcal{H}}$ is a VC-hull class. 

We apply Corollary 9.5 of \cite{Kosorok2008}, which states that for a VC-hull class generated by a base class with VC-index $V$, the covering number satisfies:
\begin{align*}
    \sup_Q \log N(\epsilon \Vert F\Vert_{Q,2}, \mathcal{B}_{\mathcal{H}}, L_2(Q)) \leq K \left(\frac{1}{\epsilon}\right)^{2 - \frac{2}{V}},
\end{align*}
where $K < \infty$ is a constant depending only on $V$. Since $V < \infty$, the exponent $ 2 - 2/V$ is strictly less than 2. We choose as an envelop function $F=1$, since $|f(\boldsymbol{X})|  = |\langle f, k(\cdot, \boldsymbol{X})| \leq \Vert f \Vert_\mathcal{H} \sqrt{k(\boldsymbol{X}, \boldsymbol{X})} \leq 1$

We now extend this result to the product space $\mathcal{F} = \mathcal{H}^{k-1}$. The norm on $\mathcal{F}$ is $\|f\|_{\mathcal{F}}^2 = \sum_{j=1}^{k-1} \|f^j\|_{\mathcal{H}}^2$.
Consider the unit ball $\mathcal{B}_{\mathcal{F}} = \{f \in \mathcal{F} : \|f\|_{\mathcal{F}} \leq 1\}$. If $f \in \mathcal{B}_{\mathcal{F}}$, then $\|f^j\|_{\mathcal{H}}^2 \leq 1$ for all $j=1, \dots, k-1$. Therefore, the unit ball of the product space is contained in the cartesian product of the scalar unit balls:
\begin{align*}
    \mathcal{B}_{\mathcal{F}} \subseteq \underbrace{\mathcal{B}_{\mathcal{H}} \times \dots \times \mathcal{B}_{\mathcal{H}}}_{k-1 \text{ times}}.
\end{align*}

Let $\delta = \epsilon / \sqrt{k-1}$ and $\mathcal{C}$ be a minimal $\delta$-cover of the scalar unit ball $\mathcal{B}_{\mathcal{H}}$ with respect to the scalar $L_2(Q)$ norm. For any $f \in \mathcal{B}_{\mathcal{F}}$, each component $f^j$ belongs to $\mathcal{B}_{\mathcal{H}}$, so there exists a $g^j \in \mathcal{C}$ such that $\|f^j - g^j\|_{L_2(Q)} \leq \delta$. Let $g = (g^1, \dots, g^{k-1}) \in \mathcal{C}_{\text{prod}}$. The distance in the product space is:
\begin{align*}
    \|f - g\|_{L_2(Q)} = \sqrt{\sum_{j=1}^{k-1} \|f^j - g^j\|_{L_2(Q)}^2} \leq \sqrt{\sum_{j=1}^{k-1} \delta^2} = \sqrt{k-1} \, \delta = \epsilon.
\end{align*}
Thus, $\mathcal{C}_{\text{prod}}$ is an $\epsilon$-cover for $\mathcal{B}_{\mathcal{F}}$. Hence we have $N(\epsilon, \mathcal{B}_{\mathcal{F}}, \|\cdot\|_{L_2(Q)}) \leq N\left(\frac{\epsilon}{\sqrt{k-1}}, \mathcal{B}_{\mathcal{H}}, \|\cdot\|_{L_2(Q)}\right)^{k-1}$
Substituting the scalar bound from Step 1, $\sup_Q \log N(\delta, \mathcal{B}_{\mathcal{H}}) \leq A \delta^{-v}$, yields:
\begin{align}
    \sup_Q \log N(\epsilon, \mathcal{B}_{\mathcal{F}}, \|\cdot\|_{L_2(Q)}) 
    &\leq (k-1) A \left(\frac{\epsilon}{\sqrt{k-1}}\right)^{-v} = O(\epsilon^{-v}).
\end{align}

\paragraph{\Cref{asm:emp-minimizer}: }
We note that \cref{asm:emp-minimizer} is exactly met by the Riesz representer theorem 

\Cref{sec:kernelized-formulation}.
\paragraph{\Cref{asm: norm} :}
We show that \cref{asm: norm} holds. Observe that $|f^j(\boldsymbol{X})| = |\langle f^j, k(\boldsymbol{X}, \cdot) \rangle_{\mathcal{H}}| \leq \|f^j\|_{\mathcal{H}} \sqrt{k(\boldsymbol{X}, \boldsymbol{X})} = \|f^j\|_{\mathcal{H}}$.

It follows from the definition of the norm that for any $j$, $\|f^j\|_{\mathcal{H}} \leq \|f\|_{\mathcal{F}}$.
Therefore, $\|f\|_\infty = \sup_{\boldsymbol{X}} \max_{j} |f^j(\boldsymbol{X})| \leq \max_j \|f^j\|_{\mathcal{H}} \leq \|f\|_{\mathcal{F}}$, satisfying the assumption with $\kappa=1$.

\paragraph{\Cref{asm: minimizer}}Last we show that \cref{asm: minimizer} holds.
We have to show that the minimizer exists. We define
\begin{align}
\label{eq:minimizer-vehicle}
J_\lambda(f):=\mathcal{R}^s(f)+\lambda\|f\|_{\mathcal{F}}^2.
\end{align}
Let $(f_n)$ be a minimizing sequence for $J_\lambda$. Using \cref{asm: bounds-bounds},
\begin{align*}
\mathcal{R}^s(f)\geq -2M.
\end{align*}
and therfor 
\begin{align*}
J_\lambda(f)\geq -2M+\lambda\|f\|_{\mathcal{F}}^2.
\end{align*}
As $(f_n)$ is minimizing and $J_\lambda(f_n)$ is bounded we have $(f_n)$ is bounded in $\mathcal{F}$.

Since $\mathcal{F}$ is a Hilbert space, so every bounded sequence admits a weakly convergent subsequence. Thus $\exists f\in\mathcal{F}$ such that
\begin{align*}
f_n \rightharpoonup f \qquad \text{weakly in }\mathcal{F}.
\end{align*}
This also implies pointwise convergence. To  see this, write $f_n=(f_n^1,\dots,f_n^{k-1})$ and $f=(f^1,\dots,f^{k-1})$, for every fixed $\boldsymbol X\in\mathcal{X}$ and every component $j$,
\begin{align*}
f_n^j(\boldsymbol X) =\langle f_n^j, k(\boldsymbol X,\cdot)\rangle_{\mathcal{H}} \to \langle f^j, k(\boldsymbol X,\cdot)\rangle_{\mathcal{H}}= f^j(\boldsymbol X),
\end{align*}
By the reproducing property. Thus
\begin{align*}
f_n(\boldsymbol X)\to f(\boldsymbol X)
\qquad\text{for every }\boldsymbol X.
\end{align*}
By continuity of the softmax map, this yields
\begin{align*}
\mathcal{L}^s(\boldsymbol X,f_n)\to \mathcal{L}^s(\boldsymbol X,f) \qquad\text{for every } \boldsymbol X.
\end{align*}
Using \cref{asm: bounds-bounds} which can be applied to $|\mathcal{L}^s(\boldsymbol X,f_n)|\leq 2M$ 
we can use dominated convergence theorem so
\begin{align*}
\mathcal{R}^s(f_n)\to \mathcal{R}^s(f).
\end{align*}
For the other term in \cref{eq:minimizer-vehicle} weak convergence implies
\begin{align*}
\langle f_n,f\rangle_{\mathcal{F}}\to \langle f,f\rangle_{\mathcal{F}}=\|f\|_{\mathcal{F}}^2.
\end{align*}
By Cauchy--Schwarz,
\begin{align*}
\langle f_n,f\rangle_{\mathcal{F}}\leq \|f_n\|_{\mathcal{F}}\|f\|_{\mathcal{F}}.
\end{align*}
Taking the limit gives
\begin{align*}
\|f\|_{\mathcal{F}}^2 \leq \left(\inf_{n\to\infty}\|f_n\|_{\mathcal{F}}\right)\|f\|_{\mathcal{F}},
\end{align*}
and thus
\begin{align*}
\|f\|_{\mathcal{F}}\leq \inf{n\to\infty}\|f_n\|_{\mathcal{F}}.
\end{align*}

Combining allows us to show that $f$ attains the minimum:
\begin{align*}
J_\lambda(f) &= \mathcal{R}^s(f)+\lambda\|f\|_{\mathcal{F}}^2 \\
&\leq \inf_{n\to\infty} (
\mathcal{R}^s(f_n)+\lambda\|f_n\|_{\mathcal{F}}^2 ).
\end{align*}
    
\end{proof}

\section{Simulation Setup Details}

\label{apx:simulation-setup}
\begin{figure}[h]
    \centering
    \begin{tikzpicture}[>=stealth, node distance=2cm, every node/.style={font=\small}]
        \node (Z) {$Z$};
        \node (A) [right of=Z] {$A$};
        \node (Y) [right of=A] {$Y$};
        \node (X) [above of=A, yshift=-0.5cm] {$\boldsymbol{X}$};
        \node (U) [below of=A, yshift=0.5cm] {$U$};
    
        \draw[->] (Z) -- (A);
        \draw[->] (A) -- (Y);
        \draw[->] (X) -- (A);
        \draw[->] (X) -- (Y);
        \draw[->, dashed] (U) -- (A);
        \draw[->, dashed] (U) -- (Y);
    \end{tikzpicture}
    
    \caption{DAG of the data generating process used in the simulation.}
    \label{fig:simulation-DAG}
\end{figure}

This appendix provides a detailed description of the data generating process used in the numerical experiments of \cref{sec:experiments}. 
The data generating process encodes the DAG \cref{fig:simulation-DAG}. In addition to illustration the method figure \cref{fig:latent-representation} shows the mapping of the covariates into the latent space. 

\begin{figure}[t]
    \centering
    \includegraphics[width=0.5\linewidth]{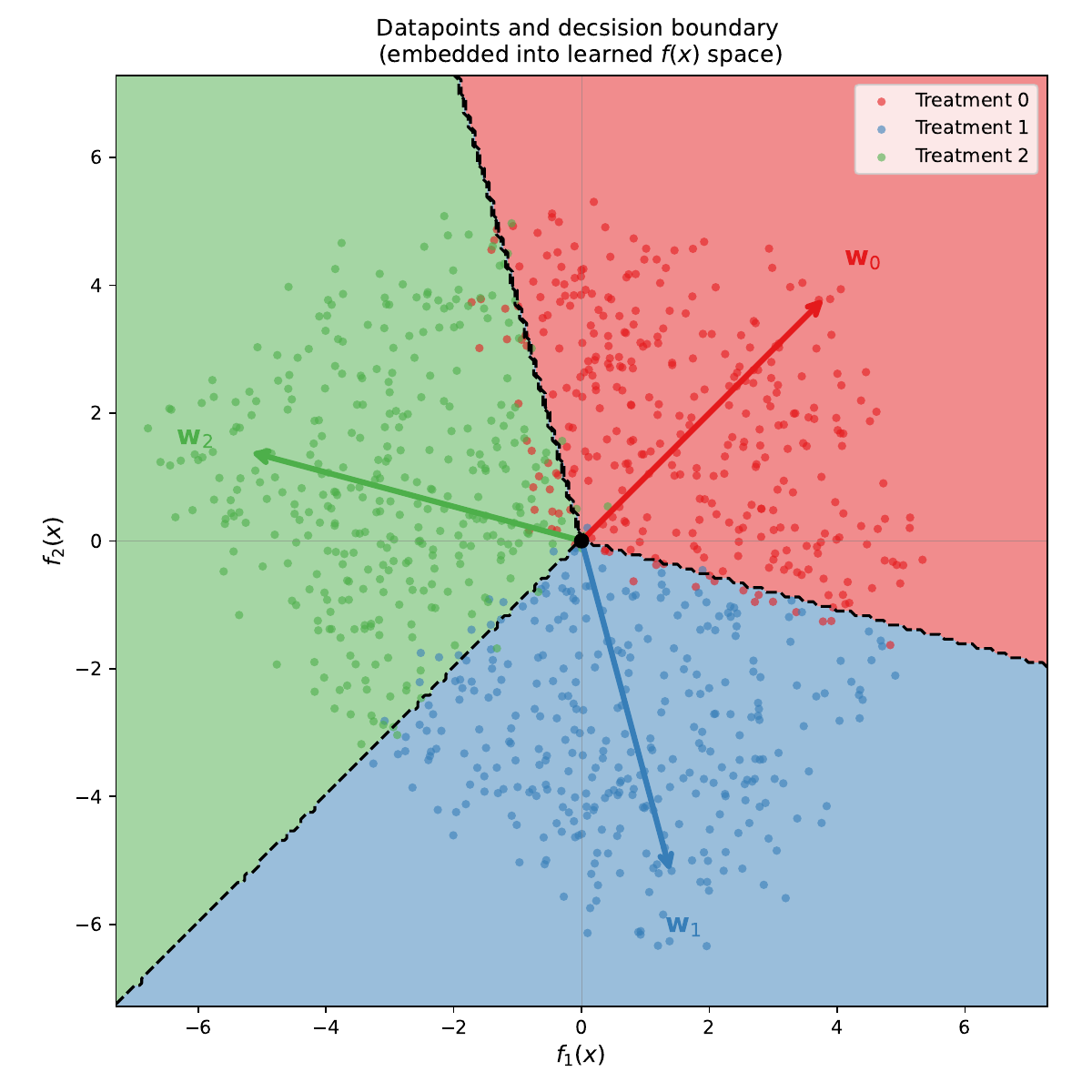}
    \caption{Illustration of the learned latent-space representation for a simulated dataset with $k=3$ treatment options. Each covariate vector $\boldsymbol{X} \in \mathbb{R}^d$ from the generated dataset is mapped by the learned function $f$ into the latent space $\mathbb R^{k-1}=\mathbb R^2$. The three treatment classes are represented by the simplex vertices $w_1,w_2,w_3$, corresponding to the embedding of the action space under $\phi$. The decision boundaries partition the latent space into three regions, where each mapped point $f(\boldsymbol{X})$ is assigned to the action whose simplex vertex yields the largest inner product.}    
    \label{fig:latent-representation}
\end{figure}

\subsection{Structural Equations}

The exogenous variables are sampled independently: $\boldsymbol{X} \sim \text{Uniform}(-1, 1)^d$ with $d = 5$, $Z \sim \text{Uniform}\{0, \ldots, |Z|-1\}$ with $|Z| = 3$, and $U \sim \text{Categorical}(\{-2, -1, 0, 1, 2\}, \text{softmax}(1, 2, 4, 2, 1))$.

\paragraph{Treatment assignment.}
The treatment $A \in \{0, \ldots, k-1\}$ is drawn from a multinomial logistic model:
\begin{align}
    P(A = a \mid \boldsymbol{X}, Z, U) = \frac{\exp(\ell_a)}{\sum_{b} \exp(\ell_b)}, \quad
    \ell_a = \boldsymbol{X}^\top \boldsymbol{\beta}_{A,a} + \boldsymbol{z}^\top \boldsymbol{\Gamma}_{\cdot, a} + \delta_a \cdot U, \label{eq:dgp-treatment}
\end{align}
where $\boldsymbol{z}$ is the one-hot encoding of $Z$, $\boldsymbol{\beta}_A \in \mathbb{R}^{d \times k}$ has entries drawn i.i.d.\ from $\text{Uniform}(-0.5, 0.5)$, and $\delta_a = \lambda_{\text{conf}} \cdot (2a - k + 1)/(k-1)$ produces linearly spaced confounding from $-\lambda_{\text{conf}}$ to $+\lambda_{\text{conf}}$. The instrument strength matrix is:
\begin{align}
    \boldsymbol{\Gamma} = \lambda_{\text{iv}} \cdot [\boldsymbol{\Gamma}_{\text{full}}]_{1:|Z|, \, 1:k}, \quad
    \boldsymbol{\Gamma}_{\text{full}} = \begin{pmatrix}
    1 & 0 & 0 & 1 & 0 & 1 & 1 \\
    0 & 1 & 0 & 1 & 1 & 0 & 1 \\
    0 & 0 & 1 & 1 & 1 & 1 & 1 \\
    0 & 0 & 0 & 0 & 0 & 0 & 0
    \end{pmatrix}, \label{eq:dgp-iv-matrix}
\end{align}
so that for $k \leq |Z|$ each treatment has a dedicated instrument level, while for $k > |Z|$ additional treatments share variation across levels.

\paragraph{Outcome.}
The binary outcome $Y$ is drawn from:
\begin{align}
    P(Y = 1 \mid \boldsymbol{X}, A = a, U) = \sigma\!\left(\boldsymbol{X}^\top \boldsymbol{\beta}_{Y,a} + \lambda_{\text{conf}} \cdot U\right), \label{eq:dgp-outcome}
\end{align}
where $\sigma$ is the logistic sigmoid. To create treatment effect heterogeneity, each treatment's coefficient vector $\boldsymbol{\beta}_{Y,a}$ loads strongly (with scale $s = 2$) on a distinct subset of $\lfloor d/k \rfloor$ consecutive covariates and has moderate negative entries elsewhere:
\begin{align}
    [\boldsymbol{\beta}_Y]_{j,a} = \begin{cases}
        s \cdot (1.5 + \epsilon_{j,a}), \quad \epsilon_{j,a} \sim \text{Uniform}(0, 0.5), & \text{if } j \text{ assigned to } a, \\
        -s \cdot \eta_{j,a}, \quad \eta_{j,a} \sim \text{Uniform}(0.3, 0.7), & \text{otherwise}.
    \end{cases} \label{eq:dgp-beta-Y}
\end{align}
This ensures clear regions where each treatment is optimal while allowing overlap.

\subsection{Ground Truth and Bounds}

Since $U$ has finite support, the true potential outcomes are computed exactly:
\begin{align}
    \mu_a(\boldsymbol{X}) = \sum_{u} P(Y = 1 \mid \boldsymbol{X}, A = a, U = u) \cdot P(U = u). \label{eq:dgp-ground-truth}
\end{align}
Similarly, the true joint probabilities $P(A = a, Y = y \mid Z = z, \boldsymbol{X})$ are obtained by marginalizing over $U$. These exact probabilities are supplied to \texttt{causaloptim} \cite{Sachs2022}, which computes sharp bounds $[L_a(\boldsymbol{X}), U_a(\boldsymbol{X})]$ by solving the linear program implied by the causal graph. For the MOML baseline, we supply the true propensity scores $P(A \mid \boldsymbol{X})$, obtained by additionally marginalizing over $Z$.

\subsection{Non-Smooth Outcome Variant}
\label{apx:non-smooth}

To test sensitivity to smoothness , we add a high-frequency perturbation to the outcome:
\begin{align}
    P(Y = 1 \mid \boldsymbol{X}, A = a, U) = \sigma\!\left(\boldsymbol{X}^\top \boldsymbol{\beta}_{Y,a} + \lambda_{\text{conf}} \cdot U + \text{sign}\!\left(\sin\!\left(5 \cdot \boldsymbol{X}^\top \boldsymbol{\beta}_{Y,a}\right)\right) \cdot 2\right). \label{eq:dgp-nonsmooth}
\end{align}
Here $\text{sign}(\sin(\cdot))$ term creates a rough square wave pattern, generating sharp transitions that smooth learners cannot capture well.

\subsection{Summary of Simulation Parameters}

\begin{table}[ht]
\centering
\caption{Simulation parameters. Each experiment varies one parameter (bold) while holding the others at their default values.}
\label{tab:dgp-parameters}
\begin{tabular}{lccccc}
\toprule
Parameter & Default & Sample Size & Confounding & IV Strength & Treatments \\
\midrule
$n$ (training size) & 12{,}000 & 5-20{,}000 & 12{,}000 & 12{,}000 & 12{,}000 \\
$\lambda_{\text{conf}}$ (confounding) & 4 & 0 & 0-10 & 4 & 4 \\
$\lambda_{\text{iv}}$ (IV strength) & 1 & 1 & 1 & 0-7 & 1 \\
$k$ (treatments) & 3 & 3 & 3 & 3 & 2-6 \\
$|Z|$ (IV levels) & 3 & 3 & 3 & 3 & 3 \\
$d$ (covariates) & 5 & 5 & 5 & 5 & 5 \\
$n_{\text{test}}$ (test size) & 5{,}000 & 5{,}000 & 5{,}000 & 5{,}000 & 5{,}000 \\
MC repeats & 10 & 10--200\textsuperscript{$\dagger$} & 10 & 10 & 10 \\
\bottomrule
\end{tabular}

\vspace{0.3em}
{\footnotesize \textsuperscript{$\dagger$}More repetitions for small $n$ to account for higher variance, when computation time is short.}
\end{table}

\printbibliography 

@inproceedings{Chen2019,
 author = {Chen, Minmin and Gummadi, Ramki and Harris, Chris and Schuurmans, Dale},
 booktitle = {Advances in Neural Information Processing Systems},
 editor = {H. Wallach and H. Larochelle and A. Beygelzimer and F. d\textquotesingle Alch\'{e}-Buc and E. Fox and R. Garnett},
 pages = {},
 publisher = {Curran Associates, Inc.},
 title = {Surrogate Objectives for Batch Policy Optimization in One-step Decision Making},
 url = {https://proceedings.neurips.cc/paper_files/paper/2019/file/84899ae725ba49884f4c85c086f1b340-Paper.pdf},
 volume = {32},
 year = {2019}
}

@article{Liu1989,
  title = {On the limited memory BFGS method for large scale optimization},
  volume = {45},
  ISSN = {1436-4646},
  url = {http://dx.doi.org/10.1007/BF01589116},
  DOI = {10.1007/bf01589116},
  number = {1–3},
  journal = {Mathematical Programming},
  publisher = {Springer Science and Business Media LLC},
  author = {Liu,  Dong C. and Nocedal,  Jorge},
  year = {1989},
  month = aug,
  pages = {503–528}
}

@BOOK{Bengio2016,
  title     = "Deep Learning",
  author    = "Bengio, Yoshua",
  publisher = "MIT Press",
  series    = "Adaptive Computation and Machine Learning series",
  month     =  nov,
  year      =  2016,
  address   = "London, England"
}

@article{Arlot2010,
  title = {A survey of cross-validation procedures for model selection},
  volume = {4},
  ISSN = {1935-7516},
  url = {http://dx.doi.org/10.1214/09-SS054},
  DOI = {10.1214/09-ss054},
  number = {none},
  journal = {Statistics Surveys},
  publisher = {Institute of Mathematical Statistics},
  author = {Arlot,  Sylvain and Celisse,  Alain},
  year = {2010},
  month = jan 
}

@article{Jonzon2025,
  title = {Adding covariates to bounds: what is the question?},
  volume = {13},
  ISSN = {2193-3685},
  url = {http://dx.doi.org/10.1515/jci-2025-0004},
  DOI = {10.1515/jci-2025-0004},
  number = {1},
  journal = {Journal of Causal Inference},
  publisher = {Walter de Gruyter GmbH},
  author = {Jonzon,  Gustav and Gabriel,  Erin E. and Sj\"{o}lander,  Arvid and Sachs,  Michael C.},
  year = {2025},
  month = jan 
}

@article{Levis2025,
  title = {Covariate-assisted bounds on causal effects with instrumental variables},
  volume = {87},
  ISSN = {1467-9868},
  url = {http://dx.doi.org/10.1093/jrsssb/qkaf028},
  DOI = {10.1093/jrsssb/qkaf028},
  number = {5},
  journal = {Journal of the Royal Statistical Society Series B: Statistical Methodology},
  publisher = {Oxford University Press (OUP)},
  author = {Levis,  Alexander W and Bonvini,  Matteo and Zeng,  Zhenghao and Keele,  Luke and Kennedy,  Edward H},
  year = {2025},
  month = may,
  pages = {1508–1527}
}

@article{Manski2000,
 ISSN = {00129682, 14680262},
 URL = {http://www.jstor.org/stable/2999533},
 author = {Charles F. Manski and John V. Pepper},
 journal = {Econometrica},
 number = {4},
 pages = {997--1010},
 publisher = {[Wiley, Econometric Society]},
 title = {Monotone Instrumental Variables: With an Application to the Returns to Schooling},
 urldate = {2026-02-18},
 volume = {68},
 year = {2000}
}

@article{Zhang2021,
  title = {Bounding Causal Effects on Continuous Outcome},
  volume = {35},
  ISSN = {2159-5399},
  url = {http://dx.doi.org/10.1609/aaai.v35i13.17449},
  DOI = {10.1609/aaai.v35i13.17449},
  number = {13},
  journal = {Proceedings of the AAAI Conference on Artificial Intelligence},
  publisher = {Association for the Advancement of Artificial Intelligence (AAAI)},
  author = {Zhang,  Junzhe and Bareinboim,  Elias},
  year = {2021},
  month = may,
  pages = {12207–12215}
}

@article{Duarte2023,
  title = {An Automated Approach to Causal Inference in Discrete Settings},
  ISSN = {1537-274X},
  url = {http://dx.doi.org/10.1080/01621459.2023.2216909},
  DOI = {10.1080/01621459.2023.2216909},
  journal = {Journal of the American Statistical Association},
  publisher = {Informa UK Limited},
  author = {Duarte,  Guilherme and Finkelstein,  Noam and Knox,  Dean and Mummolo,  Jonathan and Shpitser,  Ilya},
  year = {2023},
  month = aug,
  pages = {1–16}
}

@article{Manski1990,
 ISSN = {00028282},
 URL = {http://www.jstor.org/stable/2006592},
 author = {Charles F. Manski},
 journal = {The American Economic Review},
 number = {2},
 pages = {319--323},
 publisher = {American Economic Association},
 title = {Nonparametric Bounds on Treatment Effects},
 urldate = {2026-02-11},
 volume = {80},
 year = {1990}
}

@article{Lawlor2008,
  title = {Mendelian randomization: Using genes as instruments for making causal inferences in epidemiology},
  volume = {27},
  ISSN = {1097-0258},
  url = {http://dx.doi.org/10.1002/sim.3034},
  DOI = {10.1002/sim.3034},
  number = {8},
  journal = {Statistics in Medicine},
  publisher = {Wiley},
  author = {Lawlor,  Debbie A. and Harbord,  Roger M. and Sterne,  Jonathan A. C. and Timpson,  Nic and Davey Smith,  George},
  year = {2008},
  month = feb,
  pages = {1133–1163}
}

@article{Imbens1994,
  title = {Identification and Estimation of Local Average Treatment Effects},
  volume = {62},
  ISSN = {0012-9682},
  url = {http://dx.doi.org/10.2307/2951620},
  DOI = {10.2307/2951620},
  number = {2},
  journal = {Econometrica},
  publisher = {JSTOR},
  author = {Imbens,  Guido W. and Angrist,  Joshua D.},
  year = {1994},
  month = mar,
  pages = {467}
}

@article{Davies2018,
  title = {Reading Mendelian randomisation studies: a guide,  glossary,  and checklist for clinicians},
  ISSN = {1756-1833},
  url = {http://dx.doi.org/10.1136/bmj.k601},
  DOI = {10.1136/bmj.k601},
  journal = {BMJ},
  publisher = {BMJ},
  author = {Davies,  Neil M and Holmes,  Michael V and Davey Smith,  George},
  year = {2018},
  month = jul,
  pages = {k601}
}

@article{Angrist1996,
  title = {Identification of Causal Effects Using Instrumental Variables},
  volume = {91},
  ISSN = {1537-274X},
  url = {http://dx.doi.org/10.1080/01621459.1996.10476902},
  DOI = {10.1080/01621459.1996.10476902},
  number = {434},
  journal = {Journal of the American Statistical Association},
  publisher = {Informa UK Limited},
  author = {Angrist,  Joshua D. and Imbens,  Guido W. and Rubin,  Donald B.},
  year = {1996},
  month = jun,
  pages = {444–455}
}

@inbook{Robins2004,
  title = {Optimal Structural Nested Models for Optimal Sequential Decisions},
  ISBN = {9781441990761},
  ISSN = {0930-0325},
  url = {http://dx.doi.org/10.1007/978-1-4419-9076-1_11},
  DOI = {10.1007/978-1-4419-9076-1_11},
  booktitle = {Proceedings of the Second Seattle Symposium in Biostatistics},
  publisher = {Springer New York},
  author = {Robins,  James M.},
  year = {2004},
  pages = {189–326}
}

@article{Athey2021,
  title = {Policy Learning With Observational Data},
  volume = {89},
  ISSN = {0012-9682},
  url = {http://dx.doi.org/10.3982/ECTA15732},
  DOI = {10.3982/ecta15732},
  number = {1},
  journal = {Econometrica},
  publisher = {The Econometric Society},
  author = {Athey,  Susan and Wager,  Stefan},
  year = {2021},
  pages = {133–161}
}

@article{Kitagawa2018,
  title = {Who Should Be Treated? Empirical Welfare Maximization Methods for Treatment Choice},
  volume = {86},
  ISSN = {0012-9682},
  url = {http://dx.doi.org/10.3982/ECTA13288},
  DOI = {10.3982/ecta13288},
  number = {2},
  journal = {Econometrica},
  publisher = {The Econometric Society},
  author = {Kitagawa,  Toru and Tetenov,  Aleksey},
  year = {2018},
  pages = {591–616}
}

@article{Manski2004,
  title = {Statistical Treatment Rules for Heterogeneous Populations},
  volume = {72},
  ISSN = {1468-0262},
  url = {http://dx.doi.org/10.1111/j.1468-0262.2004.00530.x},
  DOI = {10.1111/j.1468-0262.2004.00530.x},
  number = {4},
  journal = {Econometrica},
  publisher = {The Econometric Society},
  author = {Manski,  Charles F.},
  year = {2004},
  month = jul,
  pages = {1221–1246}
}

@book{Tsiatis2019,
  title = {Dynamic Treatment Regimes: Statistical Methods for Precision Medicine},
  ISBN = {9780429192692},
  url = {http://dx.doi.org/10.1201/9780429192692},
  DOI = {10.1201/9780429192692},
  publisher = {Chapman and Hall/CRC},
  author = {Tsiatis,  Anastasios A. and Davidian,  Marie and Holloway,  Shannon T. and Laber,  Eric B.},
  year = {2019},
  month = dec 
}

@article{Murphy2003,
  title = {Optimal Dynamic Treatment Regimes},
  volume = {65},
  ISSN = {1467-9868},
  url = {http://dx.doi.org/10.1111/1467-9868.00389},
  DOI = {10.1111/1467-9868.00389},
  number = {2},
  journal = {Journal of the Royal Statistical Society Series B: Statistical Methodology},
  publisher = {Oxford University Press (OUP)},
  author = {Murphy,  S. A.},
  year = {2003},
  month = apr,
  pages = {331–355}
}

@article{Cui2020,
  title = {A Semiparametric Instrumental Variable Approach to Optimal Treatment Regimes Under Endogeneity},
  volume = {116},
  ISSN = {1537-274X},
  url = {http://dx.doi.org/10.1080/01621459.2020.1783272},
  DOI = {10.1080/01621459.2020.1783272},
  number = {533},
  journal = {Journal of the American Statistical Association},
  publisher = {Informa UK Limited},
  author = {Cui,  Yifan and Tchetgen Tchetgen,  Eric},
  year = {2020},
  month = aug,
  pages = {162–173}
}

@article{garreau2017large,
  title={Large sample analysis of the median heuristic},
  author={Garreau, Damien and Jitkrittum, Wittawat and Kanagawa, Motonobu},
  journal={arXiv preprint arXiv:1707.07269},
  year={2017}
}

@book{Kosorok2008,
  title = {Introduction to Empirical Processes and Semiparametric Inference},
  ISBN = {9780387749785},
  ISSN = {0172-7397},
  url = {http://dx.doi.org/10.1007/978-0-387-74978-5},
  DOI = {10.1007/978-0-387-74978-5},
  journal = {Springer Series in Statistics},
  publisher = {Springer New York},
  author = {Kosorok,  Michael R.},
  year = {2008}
}

@book{Steinwart2008,
  title = {Support Vector Machines},
  ISBN = {9780387772424},
  ISSN = {2197-4128},
  url = {http://dx.doi.org/10.1007/978-0-387-77242-4},
  DOI = {10.1007/978-0-387-77242-4},
  journal = {Information Science and Statistics},
  publisher = {Springer New York},
  author = {Steinwart,  Ingo and Christmann,  Andreas},
  year = {2008}
}

@article{Zhang2020,
  title = {Multicategory Outcome Weighted Margin-based Learning for Estimating Individualized Treatment Rules},
  ISSN = {1017-0405},
  url = {http://dx.doi.org/10.5705/ss.202017.0527},
  DOI = {10.5705/ss.202017.0527},
  journal = {Statistica Sinica},
  publisher = {Statistica Sinica (Institute of Statistical Science)},
  author = {Zhang,  Chong and Chen,  Jingxiang and Fu,  Haoda and He,  Xuanyao and Zhao,  Ying-Qi and Liu,  Yufeng},
  year = {2020}
}

@article{Lee2004,
  title = {Multicategory Support Vector Machines: Theory and Application to the Classification of Microarray Data and Satellite Radiance Data},
  volume = {99},
  ISSN = {1537-274X},
  url = {http://dx.doi.org/10.1198/016214504000000098},
  DOI = {10.1198/016214504000000098},
  number = {465},
  journal = {Journal of the American Statistical Association},
  publisher = {Informa UK Limited},
  author = {Lee,  Yoonkyung and Lin,  Yi and Wahba,  Grace},
  year = {2004},
  month = mar,
  pages = {67–81}
}

@book{Pearl2009,
  title = {Causality: Models,  Reasoning,  and Inference},
  ISBN = {9780521749190},
  url = {http://dx.doi.org/10.1017/CBO9780511803161},
  DOI = {10.1017/cbo9780511803161},
  publisher = {Cambridge University Press},
  author = {Pearl,  Judea},
  year = {2009},
  month = sep 
}

@article{Balke1997,
  title = {Bounds on Treatment Effects from Studies with Imperfect Compliance},
  volume = {92},
  ISSN = {1537-274X},
  url = {http://dx.doi.org/10.1080/01621459.1997.10474074},
  DOI = {10.1080/01621459.1997.10474074},
  number = {439},
  journal = {Journal of the American Statistical Association},
  publisher = {Informa UK Limited},
  author = {Balke,  Alexander and Pearl,  Judea},
  year = {1997},
  month = sep,
  pages = {1171–1176}
}

@article{Sachs2022,
  title = {A General Method for Deriving Tight Symbolic Bounds on Causal Effects},
  volume = {32},
  ISSN = {1537-2715},
  url = {http://dx.doi.org/10.1080/10618600.2022.2071905},
  DOI = {10.1080/10618600.2022.2071905},
  number = {2},
  journal = {Journal of Computational and Graphical Statistics},
  publisher = {Informa UK Limited},
  author = {Sachs,  Michael C. and Jonzon,  Gustav and Sj\"{o}lander,  Arvid and Gabriel,  Erin E.},
  year = {2022},
  month = may,
  pages = {567–576}
}

@article{Zhang2014,
  title = {Multicategory angle-based large-margin classification},
  volume = {101},
  ISSN = {1464-3510},
  url = {http://dx.doi.org/10.1093/biomet/asu017},
  DOI = {10.1093/biomet/asu017},
  number = {3},
  journal = {Biometrika},
  publisher = {Oxford University Press (OUP)},
  author = {Zhang,  C. and Liu,  Y.},
  year = {2014},
  month = jul,
  pages = {625–640}
}

@article{Cui2021,
  title = {Individualized Decision Making Under Partial Identification: ThreePerspectives,  Two Optimality Results,  and One Paradox},
  url = {http://dx.doi.org/10.1162/99608f92.d07b8d16},
  DOI = {10.1162/99608f92.d07b8d16},
  journal = {Harvard Data Science Review},
  publisher = {MIT Press - Journals},
  author = {Cui,  Yifan},
  year = {2021},
  month = jun 
}

@book{vanderVaart1996,
  title = {Weak Convergence and Empirical Processes},
  ISBN = {9781475725452},
  ISSN = {0172-7397},
  url = {http://dx.doi.org/10.1007/978-1-4757-2545-2},
  DOI = {10.1007/978-1-4757-2545-2},
  journal = {Springer Series in Statistics},
  publisher = {Springer New York},
  author = {van der Vaart,  Aad W. and Wellner,  Jon A.},
  year = {1996}
}

@book{Scholkopf2001,
author = {Scholkopf, Bernhard and Smola, Alexander J.},
title = {Learning with Kernels: Support Vector Machines, Regularization, Optimization, and Beyond},
year = {2001},
isbn = {0262194759},
publisher = {MIT Press},
address = {Cambridge, MA, USA}
}

@article{puzhang2021,
    author = {Pu, Hongming and Zhang, Bo},
    title = {Estimating Optimal Treatment Rules with an Instrumental Variable: A Partial Identification Learning Approach},
    journal = {Journal of the Royal Statistical Society Series B: Statistical Methodology},
    volume = {83},
    number = {2},
    pages = {318-345},
    year = {2021},
    month = {03},
    issn = {1369-7412},
    doi = {10.1111/rssb.12413},
    url = {https://doi.org/10.1111/rssb.12413},
    eprint = {https://academic.oup.com/jrsssb/article-pdf/83/2/318/49320528/jrsssb_83_2_318.pdf},
}
\end{refsection}
\end{document}